\newif\iffull\fulltrue
\title{Decremental Greedy Polygons and Polyhedra Without Sharp Angles}
\author{David Eppstein\thanks{Department of Computer Science, University of California, Irvine. Research supported in part by NSF grant CCF-2212129.}}
\date{ }
\begin{document}
\maketitle  

\begin{abstract}
We show that the max-min-angle polygon in a planar point set can be found in time $O(n\log n)$ and a max-min-solid-angle convex polyhedron in a three-dimensional point set can be found in time $O(n^2)$. We also study the maxmin-angle polygonal curve in 3d, which we show to be $\mathsf{NP}$-hard to find if repetitions are forbidden but can be found in near-cubic time if repeated vertices or line segments are allowed, by reducing the problem to finding a bottleneck cycle in a graph.
We formalize a class of problems on which a decremental greedy algorithm can be guaranteed to find an optimal solution, generalizing our max-min-angle and bottleneck cycle algorithms, together with a known algorithm for graph degeneracy.
\end{abstract}

\begin{bibunit}

\section{Introduction}
In this work, we study  the natural problem of finding a polygon with vertices drawn from $n$ given points, maximizing its minimum (sharpest) angle (\cref{fig:maxmin-angle}). As we show, there exists an optimal polygon that is convex. To find it, we define the quality of a point, in a given subset, to be $2\pi$ if it is interior to the convex hull of the subset, or its interior angle if it is a vertex of the convex hull. This quality is monotonic: as we delete points, the quality of any remaining point can only decrease, as it becomes a hull vertex or as it loses hull neighbors. Therefore, it is safe to delete the point of minimum quality: any better polygon than the convex hull of the current subset cannot include the deleted point. A greedy algorithm that repeatedly deletes the sharpest hull vertex, and then returns the best polygon found throughout this deletion process, finds the maxmin-angle polygon in time $O(n\log n)$. After detailing this method we extend it to analogous problems of finding the max-min-solid angle convex polyhedron in 3d  We reduce max-min-angle polygons in 3d to finding bottleneck cycles in graphs, to which we apply related decremental greedy algorithms.

The decremental greedy nature of our algorithms both for geometry problems (max-min angle polygons and polyhedra) and graph problems (bottleneck cycles) suggests that they share a common generalization. We formalize a class of bottleneck optimization problems that includes these problems and can be solved optimally by decremental greedy algorithms. Our formalization also encompasses the known problem of computing graph degeneracy, the minimum degree of a vertex in a subgraph chosen to maximize this degree. A classical linear-time algorithm for degeneracy~\cite{MatBec-JACM-83} repeatedly removes a minimum-degree vertex until a given graph becomes empty; the degeneracy equals the maximum of the degrees of the vertices at the times of their removal.  Generalizing convex hull angles and vertex degrees to other measures of element quality, we define the \emph{bottleneck subset problem}, in which we seek a (nonempty) subset of a given set of elements whose worst element is as good as possible, according to a quality measure that can only worsen as other elements are removed. As we show, these problems can be solved by a decremental greedy algorithm that repeatedly removes the lowest-quality element.

\begin{figure}[t]
\centering\includegraphics[width=\columnwidth]{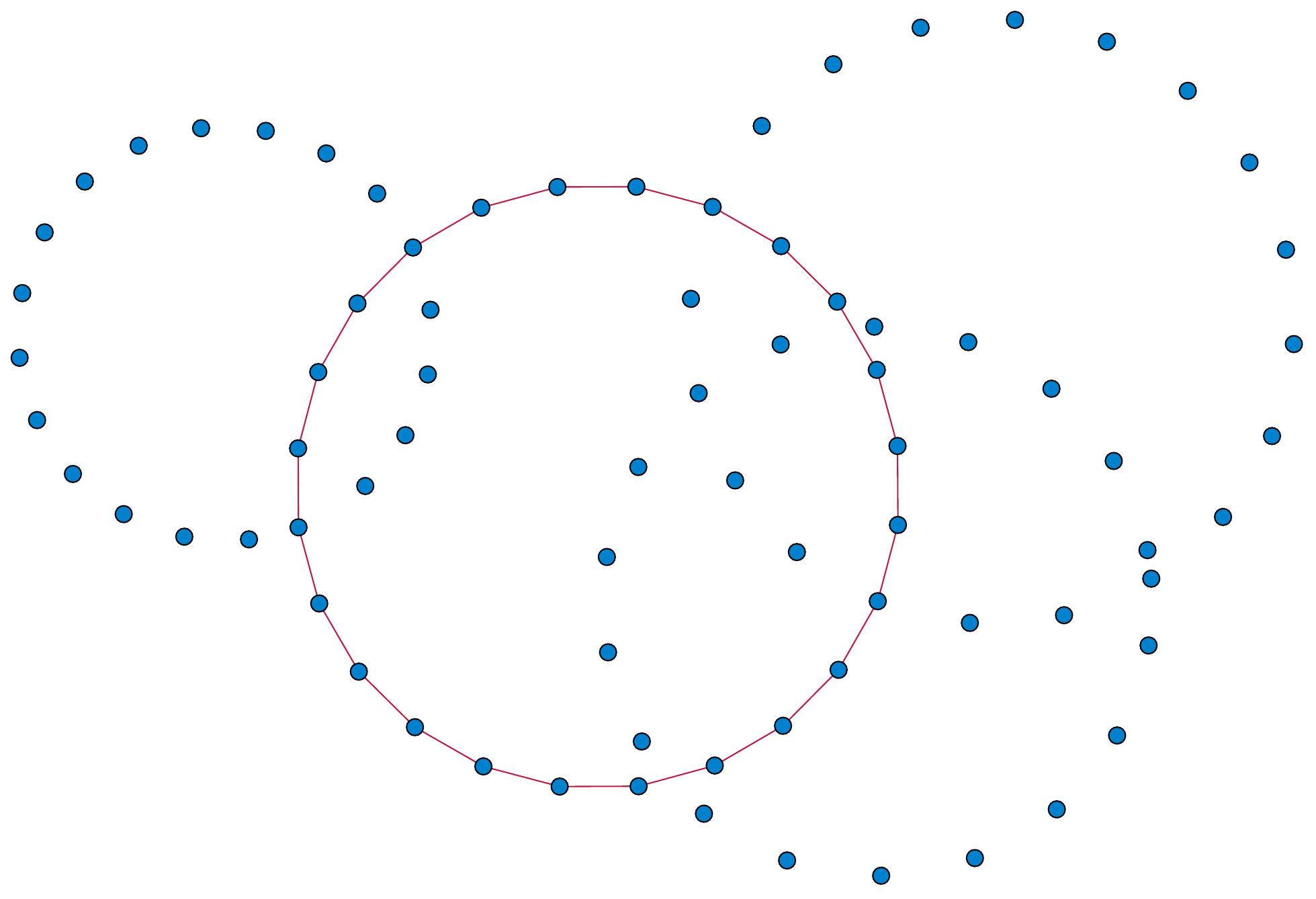}
\caption{The maxmin-angle simple polygon in a given set of points}
\label{fig:maxmin-angle}
\end{figure}

Although greedy algorithms are commonly associated with matroids, our formalization does not apply to decremental greedy algorithms for a max-min matroid base such as a maximum spanning tree, nor to greedy algorithms such as Dijkstra's algorithm. Intuitively, this difference comes from the direction of monotonicity. In decremental greedy matroid algorithms, elements become more valuable as other elements are removed and they become needed to complete a base, and in Dijkstra's algorithm, vertices in the priority queue become more valuable as better paths are found to reach them. In contrast, in the algorithms we study, elements become less valuable as other elements are removed. Our formalization is related to a different type of greedoid, an \emph{antimatroid}~\cite{KorLovSch-Greedoids-91}; see discussion following \cref{thm:antimatroid}.

\subsection{New results}
\cref{sec:formal} formalizes the bottleneck subset problem for monotonic quality measures, and describes the general decremental greedy algorithm for solving this problem. We provide in \cref{sec:geom} the following applications to geometric optimization:
\begin{itemize}
\item We prove that a maxmin-angle simple polygon in a planar point set~(\cref{fig:maxmin-angle}) can be chosen to be a convex polygon, and we show how to find it in time $O(n\log n)$.
\item We prove that a maxmin-solid angle polyhedral surface in a 3d point set can be chosen to be a convex polyhedron, and we show how to find it in time $O(n^2)$.
\item For a 3d point set, the maxmin-angle closed polygonal curve (by which we mean a cyclic sequence of line segments meeting end to end) may intersect at points or even entire line segments, so we do not call it a polygon. Even if it is a polygon, it may use points interior to its convex hull, or may be knotted. We show how to find a maxmin-angle closed polygonal curve in time $O(n^3\log^* n)$, or in time $O\bigl(n^3(\log n\log\log n)^2\bigr)$ if repeated points are allowed but repeated line segments are forbidden.
\end{itemize}

For space reasons we relegate additional results to 
\iffull
appendices. \cref{sec:antimatroid} details the antimatroid nature of the removals performed by the decremental greedy algorithm. In \cref{sec:np-hard} we outline a proof that the maxmin-angle non-self-intersecting curve in a 3d point set is $\mathsf{NP}$-hard to find. The same reduction also applies to the maxmin-angle non-self-intersecting unknotted curve.Our 3d polygonal curve results are based on decremental greedy algorithms for variations of bottleneck cycles in graphs, which we elaborate in \cref{sec:graphs}. 
\else
the full version of this paper, including details on the antimatroid nature of the removals performed by the decremental greedy algorithm, on the $\mathsf{NP}$-hardness of finding 3d maxmin-angle non-self-intersecting curves, and on algorithms for bottleneck cycles in graphs. 

Those graph algorithms are used for our 3d polygonal curve algorithms, so for completeness we state them here.
\fi
For undirected graphs the bottleneck cycle problem has an easy non-decremental linear time algorithm. For other types of graphs, it becomes more complicated:
\begin{itemize}
\item In a directed graph with $m$ edges, or a mixed graph with $m$ directed and undirected edges, we show how to find a bottleneck cycle in time $O(m\log^* m)$.
\item We also consider \emph{polar graphs} or \emph{switch graphs}, in which each vertex has two poles at which edges attach, and a \emph{regular cycle} must pass through both poles of each of its vertices (\cref{fig:polarize}). We show how to find a bottleneck regular cycle in such a graph in time $O\bigl(m(\log m\log\log m)^2\bigr)$.
\end{itemize}

\subsection{Related work}
As far as we know the graph bottleneck cycle problems that we study are novel, but bottleneck path and bottleneck spanning tree problems were already studied by Pollack in 1960. The maximum spanning tree follows undirected bottleneck paths, and a variant of Dijkstra's algorithm constructs directed bottleneck paths~\cite{Pol-OR-60}.
Additional algorithms for these problems are known~\cite{GabTar-Algs-88,DuaPet-SODA-09}, and they have several applications~\cite{Sch-SCW-11,UllLeeHas-ICCAD-09,FerGarArb-OR-98}.

Connecting given points into curves or surfaces has been studied with the goal of reconstructing an unknown shape from sparse samples~\cite{Att-CGTA-98,AmeBerEpp-GMIP-98,OhrPeePar-CGF-21}, in some cases assuming that the curve is sampled densely enough to cause all angles to be close to $\pi$~\cite{AmeBerEpp-GMIP-98}. The problems that we study have a similar flavor, but for curves through a subset of points rather than requiring a curve to pass through or near all points. We are unaware of previous work using the max-min-angle optimization criterion for curves and surfaces, but this criterion is well known in computational geometry in the context of Delaunay triangulations, which maximize the minimum angle among triangulations of given planar points~\cite{Sib-CJ-78}, and has also been applied to other forms of triangulation~\cite{Mit-CGTA-97,Joe-IJNME-91,AmeBerEpp-Algs-99}. Maximizing the minimum angle is also important  in graph drawing, where the minimum angle at any vertex of a graph drawing is its \emph{angular resolution}~\cite{ForHagHar-SICOMP-93,MalPap-SIDMA-94}.

Dynamic programming can find convex polygons with vertices in a planar point set, optimizing a broad range of criteria~\cite{ChvKli-CN-80,AviRap-SoCG-85,EppOveRot-DCG-92}. However, we do not require the maxmin-angle polygon to be convex (instead this is an emergent property of the result) and our algorithm is simpler and faster than the known dynamic programming methods. It is not obvious how to generalize the dynamic programs for optimal convex polygons to curves or surfaces in higher dimensions.

\section{Formalization}
\label{sec:formal}

We define a \emph{monotone bottleneck subset problem} to consist of a set $U$ of elements, and a function $q(x,S)$  that takes as input a pair $(x,S)$ where $x\in S$ and $S\subset U$, and produces as output a real number, the quality of~$x$ as a member of $S$. We require $q$ to be monotone: whenever $x\in S\subset T\subset U$, we have $q(x,S)\le q(x,T)$. Intuitively, removing other elements from a subset containing $x$ causes the quality of $x$ to decrease or stay the same.  We define the quality $Q(S)$ of a nonempty subset $S$ to be $Q(S)=\min_{x\in S} q(x,S)$, the least quality of a member of~$S$, with $Q(\emptyset)=-\infty$, a flag value preventing the empty set from being optimal. Our goal is to find a nonempty subset whose quality is maximum, which we call the \emph{bottleneck subset}. If $x$ is a lowest-quality element of a subset $S$ we call $x$ a \emph{bottleneck element} of $S$.

The \emph{decremental greedy algorithm} for a monotone bottleneck subset problem performs the following steps:

\newpage
\begin{enumerate}
\item Initialize two sets $S$ and $T$  to both equal $U$.
\item While $S$ is not empty, repeat the following steps:
\begin{itemize}
\item If $Q(S)>Q(T)$, set $T=S$.
\item Find any $x$ with $q(x,S)\le Q(T)$.
\item Remove $x$ from $S$.
\end{itemize}
\item Return $T$.
\end{enumerate}

Any bottleneck element $x$ of~$S$ satisfies $q(x,S)\le Q(T)$, so the algorithm always has an element that it can remove, although we do not require it to remove a bottleneck element at each step. When the bottleneck quality $\beta$ is already known, a simpler decremental algorithm can find the bottleneck subset: repeatedly remove any element of quality less than $\beta$ until all remaining elements have greater value. Call this the \emph{known-$\beta$ algorithm}.

\begin{theorem}
\label{thm:antimatroid}
Every monotone bottleneck subset problem has a unique maximal bottleneck subset. Regardless of the choices they make at each step, the decremental greedy algorithm and known-$\beta$ algorithm both always find and return the maximal bottleneck subset.
\end{theorem}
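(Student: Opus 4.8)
The plan is to fix the optimal bottleneck quality $\beta=\max\{Q(S):\emptyset\ne S\subseteq U\}$ (attained since $U$ is finite) and to call a nonempty $S$ a \emph{bottleneck subset} when $Q(S)=\beta$. The first step is to show that the family $\mathcal B$ of bottleneck subsets is closed under arbitrary unions. If $x$ lies in the union $R$ of a subfamily of $\mathcal B$, then $x\in S$ for some bottleneck subset $S\subseteq R$, so monotonicity gives $q(x,R)\ge q(x,S)\ge Q(S)=\beta$; taking the minimum over $x\in R$ shows $Q(R)\ge\beta$, hence $Q(R)=\beta$ (since $R$ is nonempty and $\beta$ is maximal). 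In particular $B^\ast:=\bigcup\mathcal B$ is itself a bottleneck subset, and by construction it contains every bottleneck subset, so it is the unique maximal one. This settles the first sentence of the theorem.

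For the known-$\beta$ algorithm I would prove the invariant $B^\ast\subseteq S$ by induction on the removals. Initially $S=U\supseteq B^\ast$. If $B^\ast\subseteq S$ and the algorithm removes $x$ with $q(x,S)<\beta$, then $x\notin B^\ast$, since otherwise monotonicity would give $q(x,S)\ge q(x,B^\ast)\ge Q(B^\ast)=\beta$, a contradiction. Hence the final set $S^\ast$ contains $B^\ast$, so it is nonempty, and the stopping rule forces $q(x,S^\ast)\ge\beta$ for every $x\in S^\ast$, i.e. $Q(S^\ast)\ge\beta$; thus $S^\ast$ is a bottleneck subset, so $S^\ast\subseteq B^\ast$ by maximality, and $S^\ast=B^\ast$.

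For the decremental greedy algorithm I would first record the routine invariants: $T$ is always nonempty (it is only reassigned to the current $S$, which is nonempty whenever the loop body runs), so $Q(T)\le\beta$ throughout, and $Q(T)$ is nondecreasing. Since one element is removed per step and $U$ is finite, $S$ eventually empties, so at some step an element of $B^\ast$ is removed for the first time; I would analyze that step. Just before it, only non-$B^\ast$ elements have been removed, so $B^\ast\subseteq S$; the removed element $x\in B^\ast$ satisfies $q(x,S)\le Q(T)$ by the selection rule (after this step's possible update of $T$), while monotonicity gives $q(x,S)\ge q(x,B^\ast)\ge\beta$. Combining, $Q(T)\ge\beta$, hence $Q(T)=\beta$. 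Now $T$ equals the set $S'$ to which it was most recently assigned (or $U$, if never reassigned); because $S$ only shrinks over time, $S'\supseteq S\supseteq B^\ast$, and $Q(S')=Q(T)=\beta$, so $S'$ is a bottleneck subset containing $B^\ast$, forcing $S'=B^\ast$, i.e. $T=B^\ast$. From then on the update test $Q(S)>Q(T)=\beta$ can never succeed, so $T$ stays equal to $B^\ast$ and is returned.

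I expect the last paragraph to be the only real obstacle. It is immediate that the greedy algorithm returns \emph{some} bottleneck subset; the work is in showing it returns the \emph{maximal} one, and the key observation is that the last reassignment of $T$ must have occurred while $S$ still contained all of $B^\ast$, which pins the reassigned value down to $B^\ast$ exactly. Everything else reduces to applying monotonicity of $q$ and the definition of $\beta$.
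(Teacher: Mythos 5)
Your proof is correct and follows essentially the same route as the paper: identify the union of all bottleneck subsets as the unique maximal one $B^*$, use monotonicity to establish the invariant $B^*\subseteq S$, and pin down the returned set at the moment that invariant would first break. The paper phrases the greedy case as the algorithm ``reaching'' $S=M$ and then recording it as $T$; your analysis of the first removal of a $B^*$ element is the same critical step, just described from the other side, and is arguably a touch more explicit about why $Q(T)$ cannot have reached $\beta$ any earlier.
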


\begin{proof}
The union $M$ of all bottleneck subsets is a bottleneck subset: each $x\in M$ belongs to a bottleneck subset $X\subset M$, so $q(x,M)\ge q(x,X)\ge\beta$. Because this is true for all elements of $M$, $Q(M)\ge\beta$, the optimal value. $M$~is the unique maximal bottleneck subset, because it is a superset of all other bottleneck subsets.

Because each proper superset $V\supset M$ is not a bottleneck subset, some element $y\in V$ has $q(y,V)<\beta$, and any such $y$ cannot be in $M$ by monotonicity of its quality. Therefore, until $M$ is reached, the known-$\beta$ algorithm can and will remove an element of the complement of $M$. And until $M$ is reached, the decremental greedy algorithm will have $Q(T)<\beta$ and the element $x$ that it removes will have $q(x,S)<\beta$; again, $x\notin M$ by monotonicity of its quality. Therefore, until $M$ is reached, the decremental greedy algorithm can and will remove an element of the complement of $M$.

Once $M$ is reached, the known-$\beta$ algorithm terminates and returns it. The decremental greedy algorithm records $M$ as the set $T$ that it will eventually return; it can never subsequently change $T$ to another set because no other set has better quality.
\end{proof}

Both algorithms make arbitrary choices that can cause them to produce different removal sequences before reaching $U$. Their families of allowed removal sequences form \emph{antimatroids}, structures that formalize the familiar introductory programming concept of a ready list. Antimatroids can be defined as families of sequences generated by a process that repeatedly appends an arbitrary ``available'' element from a given set, under the constraint that availability is determined by a monotonic function of the elements that have already been appended. Less formally, once an element becomes available, it remains available until it is appended itself, and availability depends only on the set of elements that have been chosen, not on their order. For the known-$\beta$ algorithm, the antimatroid property is straightforward (the property of having quality at most $\beta$ is determined by a monotonic function, as required) but for the decremental greedy algorithm, it requires a proof; see
\iffull
\cref{sec:antimatroid}.
\else
the full version of this paper.
\fi

According to the monotonicity that we require our quality measures to satisfy, each element gets worse (prioritized for earlier removal) as other elements around it are removed. This behavior should be contrasted with the quality of elements in max-min matroid problems such as the maximum spanning tree, which can also be solved by a decremental algorithm (remove the minimum-weight non-bridge edge until all remaining edges are bridges). In the decremental maximum spanning tree algorithm, an element (an edge) either keeps its priority (its weight) or gets a better priority (it becomes an unremovable bridge) as the algorithm progresses, rather than getting worse, so the decremental greedy maximum spanning tree algorithm does not fit into our framework.

Finding graph degeneracy is a bottleneck subset problem where $U$ is the set of vertices of a given graph $G$ and $q(x,S)=\deg_{G[S]}x$ is the degree of vertex $x$ in the subgraph induced by $S$. The bottleneck elements of any induced subgraph are its minimum-degree vertices. The bottleneck subset is a set of vertices that induces a subgraph maximizing its minimum degree.
The linear-time degeneracy algorithm of Matula and Beck~\cite{MatBec-JACM-83} repeatedly removes a minimum-degree vertex, as a special case of the decremental greedy algorithm for this quality. However, even for graph degeneracy, the antimatroid nature of the decremental greedy algorithm appears to be novel. To achieve this antimatroid property, we require a more general algorithm, allowing the removal of any vertex whose degree is at most the current quality bound, rather than the special case that only removes minimum-degree vertices.

\section{Geometric applications}
\label{sec:geom}

\subsection{Polygons in 2d}
\label{sec:2d}

\begin{lemma}
\label{lem:walk-is-convex}
Let $S$ be a finite set of points in the plane. Then there exists a convex polygon $P$ with vertices in~$S$ that maximizes the minimum angle among all closed polygonal curves (allowing repeated vertices and edges) with vertices in $S$.
\end{lemma}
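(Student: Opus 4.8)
The plan is to split the statement into an easy inclusion and a substantive replacement argument. Every convex polygon is itself a closed polygonal curve, so the best minimum angle over convex polygons with vertices in $S$ is at most the best minimum angle over all closed polygonal curves with vertices in $S$; the real work is the reverse inequality. I will assume $S$ contains three non-collinear points, since otherwise there is no nondegenerate convex polygon with vertices in $S$ and the statement is vacuous. Fix any closed polygonal curve $C=(p_1,\dots,p_k)$, with indices read cyclically, and let $\beta$ be its minimum angle, where the angle at an occurrence $p_j$ means $\angle p_{j-1}p_jp_{j+1}\in[0,\pi]$. If $\beta=0$ then any nondegenerate triangle with vertices in $S$ is a convex polygon with strictly larger minimum angle, so in establishing the reverse inequality we may assume $\beta>0$; note this forces every three cyclically consecutive vertices of $C$ to be distinct.

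The heart of the argument is to let $V\subseteq S$ be the set of distinct points occurring in $C$ and to show that the boundary of $\operatorname{conv}(V)$, traversed as a polygon, has minimum angle at least $\beta$. First I would verify that $V$ is not collinear: the edge vectors of a closed curve sum to zero, so if $V$ lay on a line, the cyclic sequence of (nonzero) edge vectors would have to reverse direction somewhere, and a direction reversal at a vertex produces an interior angle of $0$, contradicting $\beta>0$. Hence $\operatorname{conv}(V)$ is a genuine convex polygon. Every vertex $w$ of $\operatorname{conv}(V)$ belongs to $V$ and so equals $p_j$ for some occurrence $j$ in $C$. Locally at $w$, the polygon $\operatorname{conv}(V)$ coincides with the wedge cut out by its two edges at $w$, and this wedge contains all of $\operatorname{conv}(V)$, in particular the two points $p_{j-1}$ and $p_{j+1}$; therefore the interior angle of $\operatorname{conv}(V)$ at $w$ is at least $\angle p_{j-1}wp_{j+1}=\angle p_{j-1}p_jp_{j+1}\ge\beta$. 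Since this holds at every vertex, $\operatorname{conv}(V)$ has minimum angle at least $\beta$, completing the reverse inequality.

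It then follows that the maximum minimum angle over convex polygons with vertices in $S$ equals the supremum of the minimum angle over all closed polygonal curves with vertices in $S$, and since every attainable value of the minimum angle is the measure of an angle determined by a triple of points of $S$ --- of which there are only finitely many --- this supremum is attained, necessarily by a convex polygon of the form $\operatorname{conv}(V)$ constructed as above. I expect the main obstacles to be the careful handling of degeneracies (ruling out $\beta=0$ and collinear $V$, and keeping track of repeated vertices and edges so that every triple under consideration is genuinely a triple of distinct points) and the wedge-containment step, which is what turns a lower bound on a curve angle at $p_j$ into a lower bound on the hull's interior angle there; the remaining bookkeeping is routine.
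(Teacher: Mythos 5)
Your proof is correct and takes essentially the same approach as the paper: replace a closed polygonal curve by the convex hull of its vertex set and observe that this can only delete vertices or widen the angle at each remaining hull vertex. You supply more detail than the paper does---in particular the wedge-containment argument justifying that the hull's interior angle at $w=p_j$ dominates $\angle p_{j-1}p_jp_{j+1}$, and the explicit handling of the degenerate cases $\beta=0$ and collinear $V$---but the underlying idea is the same.
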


\begin{proof}
Because $S$ determines finitely many angles, the max-min angle among closed polygonal curves exists. Let $W$ be any closed polygonal curve through $S$ attaining this angle, and let $P$ be the convex hull of $W$. Then, compared to $W$, $P$ may omit some vertices and may increase the angle of the others that remain; both of these changes can only increase the sharpest angle in $P$, relative to the sharpest angle in $W$. Therefore, $P$ is also a max-min angle closed polygonal curves, as was stated to exist in the lemma.
\end{proof}

\begin{theorem}
Let $S$ be a finite set of points in the plane. Then in time $O(n\log n)$ we can find a convex polygon $P$ with vertices in $S$ that maximizes the minimum angle among all closed polygonal curves in $S$.
\end{theorem}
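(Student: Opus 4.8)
The plan is to recast this optimization as an instance of the monotone bottleneck subset problem of \cref{sec:formal}, solve it by the decremental greedy algorithm of \cref{thm:antimatroid}, and implement that algorithm with a deletion-only dynamic convex hull data structure so that the whole run takes $O(n\log n)$ time. Correctness will then be essentially immediate from \cref{thm:antimatroid} and \cref{lem:walk-is-convex}; the running time is where the work lies.

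For correctness I would take $U=S$ and define the quality $q(x,S')$ of a point $x\in S'\subseteq S$ to be its interior angle in the boundary polygon of $\operatorname{conv}(S')$ when $x$ is a vertex of that polygon, to be $\pi$ when $x$ lies in the relative interior of a hull edge, to be $2\pi$ when $x$ is interior to $\operatorname{conv}(S')$, and to be $0$ when $S'$ has fewer than three affinely independent points. This $q$ is monotone for the reason noted in the introduction: deleting points can only turn an interior point into a boundary point, turn a relative-interior edge point into a vertex, or pull the hull neighbors of a vertex inward, and each such event only decreases the value. Hence for non-degenerate $S'$ the quality $Q(S')=\min_x q(x,S')$ equals the minimum interior angle of the convex polygon $\operatorname{conv}(S')$. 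By \cref{lem:walk-is-convex} the max-min angle $\alpha^\ast$ over all closed polygonal curves in $S$ is attained by some $\operatorname{conv}(V)$, $V\subseteq S$, so the optimal bottleneck value $\beta$ satisfies $\beta\ge Q(V)=\alpha^\ast$; conversely every non-degenerate $S'$ yields the convex polygon $\operatorname{conv}(S')$, which is a closed polygonal curve, so $Q(S')\le\alpha^\ast$, while degenerate $S'$ have $Q(S')=0\le\alpha^\ast$. Thus $\beta=\alpha^\ast$, the maximal bottleneck subset $M$ has $Q(M)=\alpha^\ast>0$ and so is non-degenerate, and $\operatorname{conv}(M)$ is a convex polygon achieving the max-min angle. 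By \cref{thm:antimatroid}, the decremental greedy algorithm returns $M$; here a bottleneck element is a sharpest vertex of the current hull, so concretely the algorithm repeatedly deletes the sharpest hull vertex and remembers the best hull seen.

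For the running time, the algorithm maintains $\operatorname{conv}(S)$ under a sequence of point deletions, and at each step must output a current hull vertex of minimum interior angle, delete it, and fix the affected angles. I would store the hull in a deletion-only dynamic convex hull structure, which handles the $n$ deletions in $O(n\log n)$ total time, together with a priority queue of the current hull vertices keyed by their interior angles. Deleting a hull vertex $v$ with hull neighbors $u,w$ changes only the angles of $u$ and $w$ and exposes a (possibly empty) contiguous chain of previously hidden points as new hull vertices between $u$ and $w$; thus each deletion triggers $O(1)$ angle updates of surviving neighbors, and because the updates are deletions only, each point of $S$ joins the hull at most once over the entire run, so the total number of newly exposed vertices — and hence of priority-queue insertions and fresh angle computations — is $O(n)$. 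With $O(\log n)$ per priority-queue operation the bookkeeping costs $O(n\log n)$, matching the hull-maintenance cost; detecting degenerate inputs (all collinear, or fewer than three points) is absorbed into the initial $O(n\log n)$ hull construction.

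The step I expect to be the main obstacle is exactly this running-time analysis, not the correctness. A black-box fully dynamic convex hull costs $\Theta(\log^2 n)$ per update and gives only $O(n\log^2 n)$, so the argument must exploit that the update sequence is purely decremental — both to invoke a faster semi-dynamic (deletion-only) hull structure and to amortize the structural changes, namely the newly exposed hull vertices together with the angle recomputations and priority-queue updates they cause, so that they sum to $O(n)$, using the fact that once a point lies on the hull it remains there until deleted.
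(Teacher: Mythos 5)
Your proof takes essentially the same approach as the paper: it casts the problem as a monotone bottleneck subset problem whose quality is the interior angle at the convex hull (with a flag value for interior points), runs the decremental greedy algorithm of \cref{thm:antimatroid}, and implements it with a decremental convex hull structure plus a priority queue of hull-vertex angles, amortizing the priority-queue traffic via the observation that, in a deletion-only sequence, each point joins the hull at most once. One small correction: your aside that fully dynamic convex hulls necessarily cost $\Theta(\log^2 n)$ per update is out of date --- Brodal and Jacob~\cite{BroJac-FOCS-02} achieve $O(\log n)$ amortized per update --- but this does not affect your argument since you correctly fall back on a decremental structure.
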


\begin{proof}
We perform the following steps:
\begin{enumerate}
\item Initialize a dynamic convex hull data structure to contain all points of $S$.
\item Initialize parameters $\theta$ to the sharpest angle of the hull, and $s$ to zero. These parameters will store the best angle found so far, and the number of removed points corresponding to that best angle.
\item Initialize an empty list $L$ of removed points.
\item While the current convex hull is non-degenerate (it has more than two vertices), repeat the following steps:
\begin{itemize}
\item Find and remove from the dynamic hull the vertex with the sharpest angle (choosing arbitrarily any vertex of sharpest angle in case of ties) and append this vertex to $L$.
\item Set $\theta$ to the maximum of its previous value and the sharpest angle of the current hull, and if the result is an increase in $\theta$ then set $s$ to the current length of $L$.
\end{itemize}
\item Return the convex hull of the point set obtained from $S$ by removing the first $s$ points of $L$.
\end{enumerate}

This is a decremental greedy algorithm where the quality of a point is its interior angle, if it is a convex hull vertex, or $2\pi$ otherwise. This quality is monotonic and the algorithm finds a max-min angle convex polygon by \cref{thm:antimatroid}. This polygon is also a max-min angle closed polygonal curve  by \cref{lem:walk-is-convex}.

The sharpest angle in the current hull can be maintained during this algorithm using a priority queue of the current convex hull vertices and their angles, updated whenever the dynamic convex hull structure adds a vertex to the hull or changes the neighbor of an existing vertex. This structure requires $O(n)$ updates over the course of the algorithm and therefore takes $O(n\log n)$ time.
Decremental or fully dynamic convex hull data structures that take $O(\log n)$ time per update are also known~\cite{HerSur-BIT-92,BroJac-FOCS-02}, leading to an $O(n\log n)$ time bound for that part of the algorithm as well.
\end{proof}

\subsection{Polyhedra in 3d}

In this section we seek a polyhedral surface, with vertices at a subset of a given point sets, maximizing the solid angle interior to the surface as viewed from any point (or vertex) of the surface. To avoid definitional issues we consider only non-self-intersecting surfaces.
Analogously to the results of the previous section, we have:

\begin{lemma}
Among non-self-intersecting polyhedral surfaces having vertices at a subset of given points in~$\mathbb{R}^3$,
there exists a convex polyhedron that maximizes the minimum solid angle.
\end{lemma}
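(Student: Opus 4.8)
The plan is to imitate the proof of \cref{lem:walk-is-convex}: take a surface attaining the optimum and pass to its convex hull. Write $\Omega_X(p)$ for the interior solid angle, at a point $p$, of a non-self-intersecting closed polyhedral surface or convex polyhedron $X$. The given point set $S$ determines only finitely many non-self-intersecting closed polyhedral surfaces with vertices in subsets of $S$, so a surface $W$ maximizing $\min_p\Omega_W(p)$ exists; let $\beta$ be this optimal value, let $V$ be the vertex set of $W$, and let $R$ be the bounded region enclosed by $W$. Put $P=\mathrm{conv}(V)$. Assuming $S$ is not contained in a plane---otherwise there is no convex polyhedron and nothing to prove---$P$ is a convex polyhedron, and every vertex of $P$ is an extreme point of $V$, hence a vertex of $W$ and a point of $S$, so $\partial P$ is a legal competitor. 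It remains to show that $\Omega_P(p)\ge\beta$ for every $p\in\partial P$.

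First take a vertex $v$ of $P$. For a set $A$ with $v$ in its closure, let $C_v(A)$ be the cone of directions $d$ such that $v+td\in A$ for all sufficiently small $t>0$; then $\Omega_P(v)$ is the area that $C_v(P)$ marks out on the unit sphere about $v$, and $\Omega_W(v)$ is the area marked out by $C_v(\overline{R})$. Since $R$ is bounded, every point of $R$ lies strictly between two points of $W$, so $\overline{R}\subseteq\mathrm{conv}(W)=P$, whence $C_v(\overline{R})\subseteq C_v(P)$; as solid angle is monotone under containment of cones, $\Omega_P(v)\ge\Omega_W(v)\ge\beta$. This is the spatial analogue of the planar observation that passing to the convex hull can only widen the angles at the vertices it retains.

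The new ingredient, with no planar analogue, is that on a convex polyhedron the minimum of the interior solid angle over the whole boundary is already attained at a vertex. In the plane this is trivial because the relative interior of an edge has straight angle $\pi$; in space the relative interior of an edge has twice its dihedral angle, which can be small. The point is that if $p$ lies in the relative interior of an edge or a facet $f$ of $P$, then the facets of $P$ through $p$ form a subset of the facets through any vertex $v$ of $f$; since the tangent cone of $P$ at a point is the intersection of the inner halfspaces of the facets through that point, we get $C_v(P)\subseteq C_p(P)$, hence $\Omega_P(v)\le\Omega_P(p)$. Therefore every boundary point of $P$ has solid angle at least that of some vertex, and together with the previous paragraph this gives $\min_{p\in\partial P}\Omega_P(p)\ge\beta$, so $\partial P$ is a convex polyhedron attaining the optimum. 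I expect this last step to be the main obstacle---confirming that the sharpest solid angle of a convex polyhedron sits at a vertex rather than along an edge---together with the minor care needed in degenerate cases where $W$ or $S$ fails to span three dimensions and $\beta$ collapses.
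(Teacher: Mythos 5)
Your proof is correct and follows the same plan as the paper's (take an optimal surface, pass to the convex hull of its vertices), but it is more careful than the paper's one-sentence argument. In particular the paper only observes that taking the hull can remove vertices and widen the solid angle at retained vertices, without explicitly justifying why the minimum of the solid angle over the \emph{whole boundary} of the resulting convex polyhedron---including edge and facet interiors---is still at least the original optimum. You supply exactly the missing observation: on a convex polytope the tangent cone at a vertex $v$ of a face $f$ is contained in the tangent cone at any point in the relative interior of $f$ (more facets through $v$, hence more halfspaces intersected), so the minimum solid angle on $\partial P$ is achieved at a vertex; combined with $C_v(\overline R)\subseteq C_v(P)$ for every hull vertex $v$, this closes the argument. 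The containment $\overline R\subseteq\mathrm{conv}(W)$ and the monotonicity of spherical area under cone containment are both stated where needed, and your remark on the degenerate coplanar case is appropriate. In short, same route as the paper, with the genuinely necessary vertex-attains-the-minimum step made explicit.
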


\begin{proof}
As in \cref{lem:walk-is-convex}, consider any polyhedral surface that maximizes the minimum solid angle, and take its convex hull. This can only remove vertices and improve the solid angle at the remaining vertices, so it must also maximize the minimum solid angle.
\end{proof}

As in the two-dimensional case, we will need a data structure for decremental convex hulls. Chan has studied this problem~\cite{Cha-JACM-10,ChaTsa-DCG-16}, but his algorithms do not represent the hull explicitly, instead using an implicit representation that allows only extreme point queries. In our case, we need to find the solid angles of the vertices of the hull, not possible with Chan's structure. Another data structure, of Buchin and Mulzer~\cite{BucMul-JACM-11} allows hulls of any subset of an input point set to be computed in randomized expected time $O\bigl(n(\log\log n)^2\bigr)$ on a word RAM.
Instead, we use a simpler method for maintaining three-dimensional convex hulls explicitly, in time $O(n)$ per point deletion. Such a data structure was described by Overmars~\cite{Ove-83} (Theorem 6.4.1.6, p. 90), but to keep the presentation self-contained we outline a simplified version. The simplified data structure consists of a balanced binary search tree for the $x$-coordinate order of the given points, together with  explicitly represented hulls of the sets of not-yet-deleted points in each subtree. After each deletion, each changed hull can be computed by merging two child hulls in time linear in its subtree. The total time for all merges adds in a geometric series to $O(n)$ per deletion. The space for this data structure is $O(n\log n)$. The version of Overmars improves the space to $O(n\log\log n)$ by storing only the hulls of large subsets of points, and makes the data structure fully dynamic rather than decremental using weight-balanced trees rather than static balanced trees, but we do not need those advances.

\begin{theorem}
Let $S$ be a finite set of points in $\mathbb{R}^3$. Then in time $O(n^2)$ we can find a convex polyhedron $P$ with vertices in $S$ that maximizes the minimum solid angle among all non-self-intersecting polyhedral surfaces having vertices at a subset of $S$.
\end{theorem}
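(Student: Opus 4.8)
The plan is to follow the same decremental-greedy template used for polygons in 2d, replacing the convex-hull structure with the explicit decremental 3d convex hull just described, and replacing ``interior angle at a hull vertex'' by ``interior solid angle at a hull vertex.'' Concretely, I would define the quality of a point $x$ in a subset $S$ to be the solid angle of the convex hull of $S$ as seen from $x$ if $x$ is a hull vertex, and $4\pi$ if $x$ lies in the interior of the hull (or on a face or edge but not a vertex, all of which we can treat as non-bottleneck). The first step is to check that this quality measure is monotone in the sense of \cref{sec:formal}: when we delete points from $S$, a hull vertex either stays a vertex with a solid angle that can only decrease (its cone of hull directions shrinks as neighbors disappear), or it becomes a vertex of a hull it was previously interior to, in which case its quality drops from $4\pi$ to something smaller. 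Either way $q(x,S)\le q(x,T)$ for $S\subset T$. By \cref{thm:antimatroid}, the decremental greedy algorithm that repeatedly removes the sharpest (minimum-solid-angle) hull vertex and tracks the best hull seen returns a convex polyhedron maximizing the minimum solid angle over all non-self-intersecting polyhedral surfaces; the extension from convex polyhedra to all such surfaces is exactly the preceding lemma.

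The remaining work is the running time. Run the decremental 3d convex hull data structure outlined above, which supports $n$ point deletions in $O(n)$ amortized time each, i.e.\ $O(n^2)$ total, using $O(n\log n)$ space. At each of the $n$ deletion steps I need (i) to identify a minimum-solid-angle vertex of the current hull, and (ii) to update whatever structure I maintain for this. Since a single deletion can already cost $\Theta(n)$ in the hull structure and can change the combinatorial structure of $\Theta(n)$ faces, there is no asymptotic penalty in simply recomputing, from the current explicit hull, the solid angle at every vertex from scratch after each deletion: each vertex's solid angle is computed from the cyclic list of its incident faces (e.g.\ by summing the dihedral angles around it and applying the spherical-excess / Girard-type formula for a spherical polygon), and the total work is $O(\text{size of current hull})=O(n)$ per step, hence $O(n^2)$ overall. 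Then scanning the vertices to find the minimum and comparing against the best-so-far is another $O(n)$ per step. So the whole algorithm is $O(n^2)$ time, matching the claimed bound, and correctness is inherited from \cref{thm:antimatroid} and the lemma.

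The main obstacle I expect is purely in the geometric bookkeeping of the quality function rather than in the algorithmic framework: one must argue carefully that the solid-angle quality is genuinely monotone even through the combinatorial transitions of the hull (a vertex passing from interior to boundary, faces merging or splitting as a neighbor is deleted), and that coplanar points, collinear edges, and other degeneracies are handled so that ``bottleneck vertex'' always points to a removable element exactly as the abstract algorithm requires. A secondary, more minor point is to confirm that solid angles of the vertices of a convex polyhedron can indeed be computed exactly (or to whatever precision the model assumes) in time linear in the vertex's degree, so that the per-step cost really is $O(n)$; this is standard via the spherical-polygon area formula but should be stated. Once these are in place, the $O(n^2)$ bound and correctness follow immediately from the decremental hull structure and \cref{thm:antimatroid}.
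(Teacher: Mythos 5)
Your proposal is essentially the same as the paper's proof: both run the decremental greedy algorithm of \cref{thm:antimatroid} on the quality measure ``solid angle if hull vertex, $4\pi$ otherwise,'' use the simplified decremental 3d hull structure at $O(n)$ per deletion, recompute/scan vertex solid angles each round, and invoke the preceding lemma to pass from convex polyhedra to all non-self-intersecting surfaces. The paper is slightly more terse (it leaves the quality function implicit by analogy to the 2d case), while you additionally flag the monotonicity-through-hull-transitions and solid-angle-computation details, but these are elaborations of the same argument rather than a different route.
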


\begin{proof}
We perform the following steps:
\begin{enumerate}
\item Initialize a dynamic convex hull data structure for the points of $S$.
\item Initialize parameters $\theta$ to the sharpest solid angle of the hull, and $s$ to zero. These parameters will store the best angle found so far, and the number of removed points corresponding to that best angle.
\item Initialize a list $L$ of removed points to an empty list.
\item While the current convex hull is non-degenerate (it has nonzero volume), repeat the following steps:
\begin{itemize}
\item Find and remove the vertex with the sharpest solid angle (choosing arbitrarily any vertex of sharpest angle in case of ties) and append this vertex to $L$.
\item Set $\theta$ to the maximum of its previous value and the sharpest angle of the current hull, and if the result is an increase in $\theta$ then set $s$ to the current length of $L$.
\item Update the hull of the remaining points.
\end{itemize}
\item Return the convex hull of the point set obtained from $S$ by removing the first $s$ points of $L$.
\end{enumerate}

Using the dynamic hull data structure outlined above, each iteration of the main loop takes time $O(n)$, and the whole algorithm takes time $O(n^2)$.
\end{proof}

\subsection{Closed polygonal curves in 3d}

To model the search for a closed polygonal curves through a given system of line segments in 3d, allowing repeated vertices but without allowing any line segment to be repeated, we use \emph{polar graphs}. These are graphs in which the edges are undirected, but are attached to one of two \emph{poles} of each vertex. A \emph{regular cycle} in such a graph is a simple cycle for which the two edges incident to each vertex are attached to different poles: if the cycle enters a vertex via one pole, it must exit via the other pole. In
\iffull
\cref{sec:polar}
\else
the full version of this paper
\fi
we describe a decremental greedy algorithm for finding a minmax-weight regular cycle in a polar graph with $m$ edges in time $O(m(\log m\log\log m)^2)$. For a given set $S$ of points in $\mathbb{R}^3$, we may define a weighted polar graph $G(S)$, as follows (\cref{fig:polarize}):
\begin{itemize}
\item The vertices of $G(S)$ are the line segments determined by pairs of points in $S$.
\item The two poles of each vertex of $G(S)$ are the two endpoints of the corresponding line segment.
\item The edges of $G(S)$ connect pairs of line segments that form a three-point polygonal chain and are weighted by the angle at the middle point of the chain.
\item For each vertex and incident edge in $G(S)$, the pole of the vertex to which the edge is attached is the middle point of the three-point polygonal chain that defines the edge.
\end{itemize}

\begin{figure}[t]
\centering\includegraphics[scale=0.25]{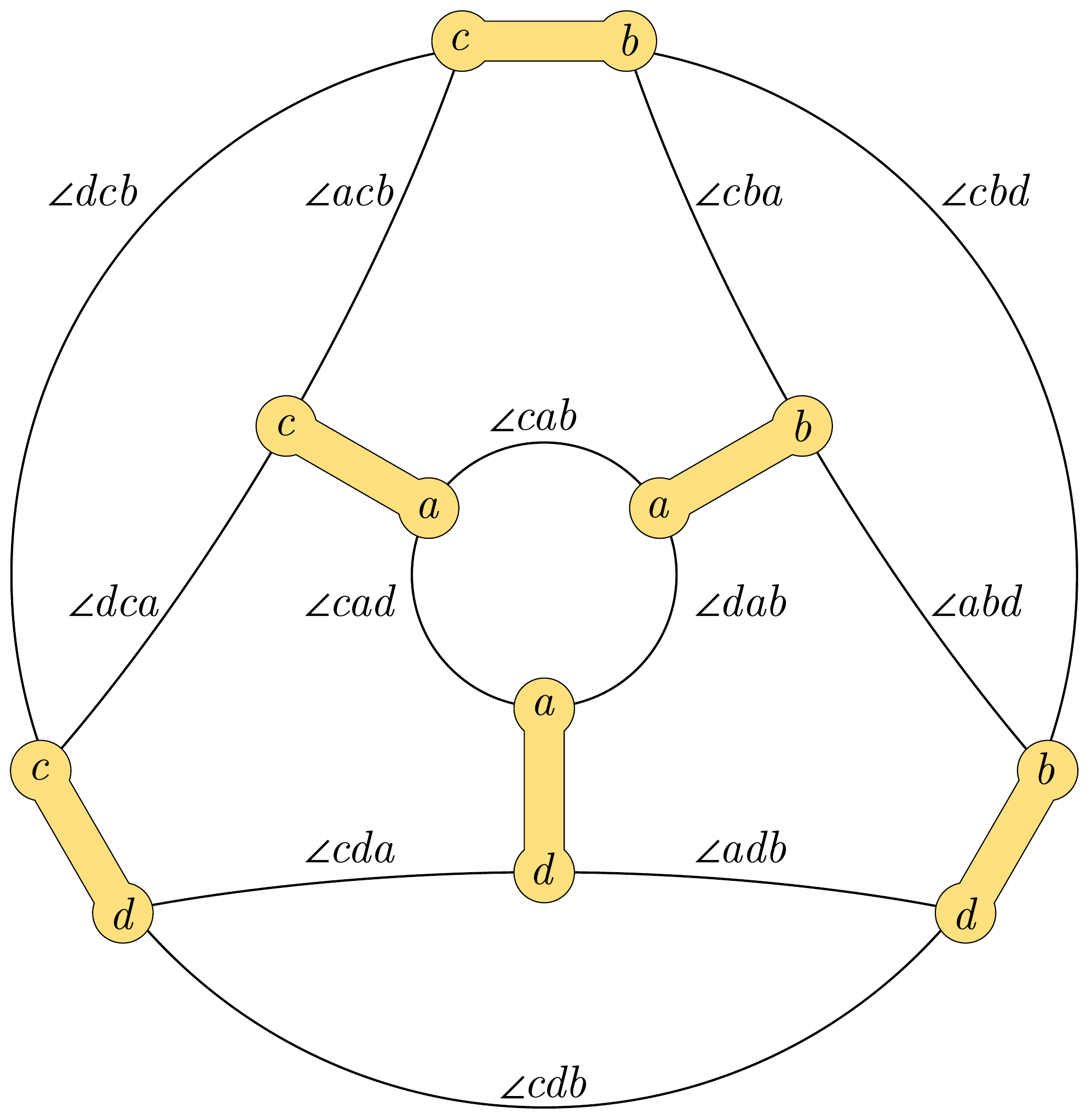}
\caption{A polar graph with six vertices representing the line segments determined by pairs of four points $a$, $b$, $c$, and $d$, and 12 edges labeled by the angles determined by triples of points. The two poles of each vertex are labeled by the points. A regular cycle must enter and exit the vertices it visits via opposite poles.}
\label{fig:polarize}
\end{figure}

Then, a regular cycle in this graph corresponds to a closed polygonal curve in 3d, allowing repeated points but not allowing repeated line segments. The minimum edge weight in this cycle is the sharpest angle of the curve. If repeated line segments are to be allowed, we can instead replace the polar graph by an ordinary graph, its \emph{double cover}. This has two copies of each vertex of the polar graph,
and two directed edges for each edge of the polar graph, in each direction. Each copy of a vertex is incident to the incoming directed edges for one pole of the vertex and to the outgoing directed edges for the other pole. The resulting directed graph is \emph{skew-symmetric} (the vertex bijection that swaps the two copies of each polar vertex also reverses all the directed edges) and has been used in algorithms for polar
\iffull
graphs; see \cref{sec:polar}.
\else
graphs.
\fi
When a directed cycle in the double cover of $G(S)$ uses both copies of a vertex, it corresponds to a closed polygonal curve that uses a line segment twice, in both directions. In
\iffull
\cref{sec:directed}
\else
the full version of this paper
\fi
we describe a decremental greedy algorithm for finding a bottleneck directed cycle in a directed graph in time $O(m\log^* m)$.

\begin{theorem}
We consider maxmin-angle closed polygonal curves in a 3d point set that may use points interior to its convex hull, may be knotted, and may self-intersect. If both repeated points and repeated line segments are allowed, we can find such a curve in time $O(n^3\log^* n)$. If repeated points but not repeated line segments are allowed, we can find it in time $O\bigl(n^3(\log n\log\log n)^2\bigr)$.
\end{theorem}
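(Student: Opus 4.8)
The plan is to prove both bounds by a direct reduction to the graph cycle problems set up just above, viewing the curve problem as a bottleneck cycle problem on $G(S)$ (in the case where no segment may be repeated) or on its double cover (in the case where a segment may be reused in opposite directions). Given those reductions, essentially all that remains is to bound the sizes of the two graphs and to reconcile the direction of the optimization.

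I would first count. The polar graph $G(S)$ has one vertex per segment spanned by $S$, hence $O(n^2)$ vertices, and one edge per three-point chain (taken up to reversal), hence $n\binom{n-1}{2}=O(n^3)$ edges, the chain $(p,q,r)$ contributing an edge of weight $\angle pqr$ attached to the $q$-poles of the vertices for $\overline{pq}$ and $\overline{qr}$. Its double cover has $2\cdot O(n^2)$ vertices and $2\cdot O(n^3)$ directed edges. In both graphs the number of edges is $m=\Theta(n^3)$, and this is the only quantity the stated running times depend on, so I would simplify using $\log^{*}(n^3)=O(\log^{*}n)$, $\log(n^3)=3\log n$, and $\log\log(n^3)=\Theta(\log\log n)$. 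Then, since we want to \emph{maximize} the minimum angle while the cycle algorithms minimize a maximum (or report a ``bottleneck'' edge), I would negate every edge weight, so that the sharpest (smallest) angle of a curve becomes the largest edge weight of the corresponding cycle, and ``max-min-angle curve'' becomes ``min-max-weight cycle.''

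Putting it together: by the correspondences recalled above, directed cycles in the double cover of $G(S)$ are exactly the closed polygonal curves in $S$ that may reuse a segment in the two opposite directions, and regular cycles of $G(S)$ itself are exactly those that reuse no segment, with the cycle's bottleneck weight equal to the curve's sharpest angle in each case; feeding the negated-weight double cover to the $O(m\log^{*}m)$-time bottleneck-directed-cycle algorithm gives time $O(n^3\log^{*}n)$, and feeding the negated-weight $G(S)$ to the $O\bigl(m(\log m\log\log m)^2\bigr)$-time minmax-regular-cycle algorithm gives time $O\bigl(n^3(\log n\log\log n)^2\bigr)$; reading the curve off the returned cycle is cheap. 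The substantive content lives in those graph algorithms and in the cycle/curve correspondence itself (including the observation that the graph model genuinely permits the self-intersecting, knotted, and interior-vertex curves named in the statement, since none of these impose constraints on the edges of $G(S)$, and that short degenerate cycles, which carry a minimum angle of $0$, are never optimal once any non-degenerate curve exists). I expect the verification of that correspondence, rather than anything in this theorem's own argument, to be where the care is needed.
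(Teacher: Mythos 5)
Your reduction is exactly the paper's: build the polar graph $G(S)$ on the $O(n^2)$ segments with $O(n^3)$ angle-weighted edges, and look for a bottleneck regular cycle in $G(S)$ (no repeated segments) or a bottleneck directed cycle in its double cover (repeated segments allowed), then plug $m=\Theta(n^3)$ into the stated graph time bounds. The counting $n\binom{n-1}{2}$ and the observation that the graph model imposes no convexity, planarity, or non-crossing constraints (so self-intersecting, knotted, and interior-vertex curves are all captured) are both right, and your remark that short degenerate cycles are not a problem is correct for an even simpler reason than you suggest: $G(S)$ has no self-loops or parallel edges, so every regular cycle and every directed cycle in the double cover already corresponds to a bona fide closed polygonal chain of length at least three.

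The one place you should be careful is the weight negation. You introduce it to turn "maximize the minimum angle" into a min-max problem, on the reading that the paper's cycle routines minimize the maximum edge weight. But the cycle algorithms are instances of the paper's decremental-greedy bottleneck-subset framework, where the quality of an edge is its weight (or $-\infty$ when it no longer lies on a cycle) and the algorithm \emph{maximizes the minimum} quality; concretely, the decremental routine repeatedly discards the lightest remaining edge and remembers the last such edge, which is exactly a max-min computation. So the angles can be fed in directly as weights, and negating them would actually flip the objective to minimizing the maximum angle, which is not what the theorem asks for. The confusion is understandable because the paper's prose in one place calls the polar-graph subroutine a ``minmax-weight regular cycle'' algorithm, but that phrasing is inconsistent with both the bottleneck-subset definition and the algorithm as specified; the intended semantics is max-min, and no sign change is needed.
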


\begin{proof}
We construct $G(S)$, with $O(n^3)$ edges, and then find either a bottleneck directed cycle in the double cover of $G(S)$, or a bottleneck regular cycle in $G(S)$, according to whether we allow or disallow repeated segments, respectively.
\end{proof}

For completeness, we state:

\begin{theorem}
\label{thm:hard}
It is $\mathsf{NP}$-complete to find the maxmin-angle simple polygon for points in~$\mathbb{R}^3$, for points on a unit sphere given by rational Euler angles.
\end{theorem}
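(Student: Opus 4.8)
The plan is to reduce from a known $\mathsf{NP}$-hard problem about finding long cycles or Hamiltonian-type structures, phrasing the reduction so that a ``good'' (min-angle bounded below) simple polygon in $\mathbb{R}^3$ exists if and only if the source instance is a yes-instance. The natural candidate is \textsc{Hamiltonian Cycle} in a graph of bounded degree, or a suitable restricted variant, since the difficulty here comes precisely from the simplicity (non-self-intersection) constraint: by the discussion preceding the theorem, without that constraint the bottleneck-cycle machinery solves the problem in polynomial time, so any hardness must exploit the global embedding constraint. First I would fix a target angle $\alpha$ close to $\pi$ and design point gadgets on the unit sphere so that the only triples of points subtending an angle $\ge\alpha$ at their middle vertex are those corresponding to traversing an edge of the source graph; points would be clustered in small ``vertex clouds'' placed so that a nearly-straight turn is possible only when moving between adjacent clouds along prescribed directions. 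Then a simple closed polygonal curve with all angles $\ge\alpha$ is forced to visit the clouds in a pattern encoding a Hamiltonian cycle.

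The key steps, in order, are: (1) choose the source problem and reduce it, if necessary, to a planar or bounded-degree form amenable to a geometric layout; (2) describe the placement of points on the unit sphere, assigning to each graph vertex a small cluster and to each graph edge a ``corridor'' direction, and verify that points can be placed with coordinates given by rational Euler angles (so the instance is genuinely a rational-sphere instance as claimed); (3) establish the \emph{completeness} direction — a Hamiltonian cycle in $G$ yields a simple spherical polygon whose every angle is $\ge\alpha$, by routing the polygon through the clusters in cyclic order and checking that the routed curve does not self-intersect because distinct corridors are geometrically separated; (4) establish \emph{soundness} — any simple polygon with all angles $\ge\alpha$ must, at each visited point, continue in a direction within a narrow cone of its incoming direction, and a counting/parity argument shows such a curve induces a closed walk in $G$ that, together with the simplicity constraint limiting revisits, is a Hamiltonian cycle. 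Finally (5) observe membership in $\mathsf{NP}$: a candidate polygon is a polynomial-size certificate, its angles are comparisons of rationals, and non-self-intersection of a polygonal curve in $\mathbb{R}^3$ is checkable in polynomial time by testing all pairs of segments.

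The hard part will be step (4), the soundness argument, and in particular handling the interaction between the angle constraint and the non-self-intersection constraint: I must rule out ``cheating'' polygons that use points from several clusters in an unintended order, or that make their sharp turns using auxiliary points I have not accounted for, or that exploit the freedom of $\mathbb{R}^3$ (as opposed to the plane) to route corridors over and under each other in ways that would be blocked in 2d. Controlling this will likely require that the clusters be small relative to the inter-cluster distances and that the corridor directions be chosen in ``general position'' so that no unintended triple of points is within $\alpha$ of straight; making these conditions precise while keeping all coordinates rational (achievable because the relevant inequalities are strict and rationals are dense, or by perturbing a construction with algebraic coordinates and checking the inequalities survive) is the main technical burden. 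A secondary obstacle is ensuring the whole construction fits on the unit sphere rather than in flat space — the spherical geometry slightly distorts angles and corridor straightness, so the separation bounds must be chosen with enough slack to absorb this curvature, and I would verify that the distortion is $O(\text{diam}^2)$ in each gadget and hence negligible for sufficiently fine clusters.
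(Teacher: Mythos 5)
Your proposal differs from the paper's in its choice of source problem: the paper reduces from \textsc{3-Satisfiability}, whereas you reduce from \textsc{Hamiltonian Cycle}. This is not merely cosmetic, because the two problems impose structurally different burdens on the geometric construction, and your choice creates a gap that your sketch does not close.

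The difficulty is in your step (4). The maxmin-angle simple polygon is free to visit any \emph{subset} of the given points, so nothing in the problem forces your curve to visit every vertex cloud. Your soundness argument concludes that a high-angle simple polygon ``induces a closed walk in $G$ that, together with the simplicity constraint limiting revisits, is a Hamiltonian cycle.'' But simplicity only limits each cloud to at most one visit; it does not compel the walk to reach all of them. If $G$ contains any short cycle $C$ (and the graphs on which Hamiltonian Cycle remains hard generally do), then routing the polygon through the clouds of $C$ alone already yields a simple closed polygonal curve with all angles above your threshold, so your reduction would declare a yes-instance even when $G$ is not Hamiltonian. To repair this you would need an additional mechanism that makes skipping a cloud costly in angle, and that mechanism is not present in the ``vertex clouds connected by corridors'' layout you describe; it is essentially orthogonal to the curvature and perturbation issues you flag as the technical burden.

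The paper avoids this problem by construction rather than by argument. Its layout is a single forced route: a sequence of variable gadgets traversed horizontally, then a sequence of clause gadgets traversed vertically, all linked end-to-end into one global loop, with the Boolean choices encoded only as a local selection among two or three parallel subpaths inside each gadget. Because the great-circle arcs carrying these subpaths are separated by $\Omega(1/n)$ while the sample spacing is much finer, a polygon that attempts to close up early must leave the prescribed arcs and incur a sharp angle; there is simply no short cycle available in the underlying layout graph. The interaction between variables and clauses is then enforced locally by blocking and crossing gadgets where the arcs intersect. In short, the paper turns a \emph{global coverage} requirement (which your Hamiltonian reduction would need) into a set of \emph{local selection} requirements, and that is what makes the angle argument go through. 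If you want to keep a Hamiltonicity-flavored source problem you would need to impose a similar forced global route (at which point you are essentially re-deriving the 3-SAT layout), so switching to 3-SAT from the start is the cleaner path.

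Your step (5), membership in $\mathsf{NP}$ for the decision version, is fine and is in fact more explicit than the paper, which leaves it implicit. Your instincts about spherical distortion and rational-coordinate rounding in steps (2) and the closing paragraph also match the paper's treatment (it uses a $\Theta(1/n)$ arc spacing, $O(1/n^2)$ sample spacing near tight turns, an angle gap of $\Theta(1/n)$, and rounds Euler angles to multiples of $1/d$ for $d=\Theta(n^3)$), so that part of your plan is sound and you correctly identified it as manageable.
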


We defer the proof to
\iffull
\cref{sec:np-hard}.
\else
the full version of this paper.
\fi

\section{Conclusions}

We have formalized a broad family of decremental greedy problems for monotone bottleneck subset problems, generalizing existing graph degeneracy algorithms, and shown its applicability in developing new algorithms in computational geometry and graphs. We have found simple greedy decremental algorithms for several maxmin-angle problems in geometric optimization: finding a planar polygon through given points maximizing the minimum angle, finding a 3d polyhedral surface through given pints maximizing the minimum solid angle, and finding a (self-intersecting) 3d closed polygonal curve through given points maximizing the minimum angle. On the other hand, constraining the 3d polygonal curve to be non-self-intersecting appears to make the problem computationally infeasible.

There appears to be room for improvement in our polyhedral surface and 3d curve time bounds. It would also be of interest to find other problems to which the same greedy decremental approach applies.

\putbib
\end{bibunit}

\iffull

\newpage
\appendix

\begin{bibunit}

\section{Antimatroid property of the decremental greedy algorithm}
\label{sec:antimatroid}

It is not completely obvious that the removal sequences of the decremental greedy algorithm form an antimatroid. The algorithm clearly meets the requirement that an element, once ready to be removed, remains ready until it is removed, but there is a second requirement: the availability of an element to be removed must depend only on the set of prior removals, and not on their ordering. For the decremental greedy algorithm,  the state of the variable $T$ appears to depend on the sequence of removals and not merely on the set of removals. Nevertheless, we prove that this algorithm defines an antimatroid, by using a structure of nested sets generalizing the $k$-cores of an undirected graph. (The $k$-cores are maximal subgraphs of minimum degree $k$, obtained by a generalization of the Matula--Beck degeneracy algorithm.)

\begin{lemma}
For any instance of the bottleneck subset problem, there exists a function $t$ such that, no matter which removal sequence is chosen by the decremental greedy algorithm, whenever the algorithm chooses the element $x$ to remove we have the equality $T=t(S)$. This function is monotonic in the sense that $S\subseteq S'\Rightarrow t(S)\subseteq t(S')$ and $S\subseteq S'\Rightarrow Q(t(S)\ge Q(t(S'))$.
\end{lemma}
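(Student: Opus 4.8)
The plan is to exhibit an explicit nested family of ``cores'', generalizing the $k$-cores of an undirected graph, and to show that the algorithm's variable $T$ is always the member of this family determined by the current set $S$. First I would extend the union argument from the proof of \cref{thm:antimatroid}: for any real $\lambda$, the union $M_{>\lambda}$ of all nonempty subsets $R$ with $Q(R)>\lambda$ is either empty or itself satisfies $Q(M_{>\lambda})>\lambda$ (the same one-line monotonicity computation applies verbatim), so $M_{>\lambda}$ is the unique maximal subset whose quality exceeds $\lambda$. From this I would build the strictly decreasing chain $U=R_0\supsetneq R_1\supsetneq\cdots\supsetneq R_k=M\supsetneq R_{k+1}=\emptyset$ by setting $\gamma_i=Q(R_i)$ and $R_{i+1}=M_{>\gamma_i}$, where $M$ is the maximal bottleneck subset; here the thresholds $\gamma_0=Q(U)<\gamma_1<\cdots<\gamma_k=\beta$ strictly increase up to the optimal bottleneck quality $\beta$, and each inclusion is strict because the bottleneck element of $R_i$ has quality exactly $\gamma_i$ in $R_i$ and hence cannot belong to $R_{i+1}=M_{>\gamma_i}$. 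Define $t(S)=R_{i(S)}$, where $i(S)=\max\{\,j:S\subseteq R_j\,\}$; this is well defined for every nonempty $S$, since $S\subseteq R_0=U$.

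Next I would prove, by induction over the iterations of the decremental greedy algorithm and independently of all of its arbitrary choices, the invariant that at each removal step---that is, just after the conditional update of $T$ within an iteration---we have $T=R_{i(S)}$ and $R_{i(S)+1}\subseteq S$. The base case is the first iteration, where $S=T=U=R_0$ and $i(U)=0$. For the inductive step, suppose at some removal step $T=R_i$ with $i(S)=i$ and $R_{i+1}\subseteq S$, so that $R_{i+1}\subsetneq S\subseteq R_i$. The algorithm may delete only an element $x$ with $q(x,S)\le Q(T)=\gamma_i$; were $x\in R_{i+1}=M_{>\gamma_i}\subseteq S$, monotonicity of $q$ would give $q(x,R_{i+1})\le q(x,S)\le\gamma_i<Q(R_{i+1})$, a contradiction, so the new set $S'=S\setminus\{x\}$ still contains $R_{i+1}$. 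If $S'=R_{i+1}$ then $Q(S')=\gamma_{i+1}>\gamma_i=Q(T)$, so the next iteration updates $T$ to $R_{i+1}$; otherwise $S'\supsetneq R_{i+1}$, which forces $Q(S')\le\gamma_i$ (else $S'\subseteq M_{>\gamma_i}=R_{i+1}$), so $T$ stays $R_i$. In both cases $T=R_{i(S')}$ and $R_{i(S')+1}\subseteq S'$, completing the induction; in particular $T=t(S)$ at every removal step.

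Finally, monotonicity of $t$ is immediate from its definition: if $S\subseteq S'$ then $\{j:S'\subseteq R_j\}\subseteq\{j:S\subseteq R_j\}$, so $i(S)\ge i(S')$; since the chain decreases this gives $t(S)=R_{i(S)}\subseteq R_{i(S')}=t(S')$, and since the thresholds $\gamma_j$ increase it gives $Q(t(S))=\gamma_{i(S)}\ge\gamma_{i(S')}=Q(t(S'))$. The crux of the whole argument is the inductive step above: one must verify that, whichever element of quality at most $Q(T)$ the algorithm happens to pick, it can never delete an element of the next core, and that it is forced to advance exactly when the current set has shrunk to that core. Both facts rest on the monotonicity of $q$ together with the maximality built into the definition of $M_{>\lambda}$; everything else---constructing the chain, reading off $t$, and checking its two monotonicity properties---is routine bookkeeping.
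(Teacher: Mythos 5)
Your proposal is correct and takes essentially the same approach as the paper's proof: both construct a nested chain of ``cores'' of strictly increasing quality ending at the maximal bottleneck subset, define $t(S)$ as the last chain member containing $S$, and argue that the algorithm's variable $T$ always equals this chain member. The paper builds the chain via the operator $X\mapsto X^{+}$ (union of subsets of $X$ whose quality reaches the next attainable level above $Q(X)$); your $R_{i+1}=M_{>\gamma_i}$ is the same set by monotonicity, just phrased with a strict threshold rather than the next attained value. Your write-up is a bit more careful than the paper's --- it makes the invariant $T=R_{i(S)}$ and $R_{i(S)+1}\subseteq S$ explicit and verifies it by induction, where the paper argues informally that the removal sequence must pass through $U,U^{+},U^{++},\dots$ and that $T$ is always one of these --- and it also explicitly checks the two monotonicity claims at the end, which the paper leaves implicit.
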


\begin{proof}
For any set $X$ with quality $Q(X)$ less than the optimal quality of the given instance, let
\[\operatorname{better}(Q)=\min\{Q(Y)\mid Q(Y)>Q(X) \land Y\subset X\},\]
the smallest improved quality that can be obtained from a subset of $S$, and let $X^+$
denote the set
\[X^+=\bigcup\{ Y\mid Q(Y)\ge \operatorname{better}(Q),\]
the union of subsets of $X$ with this quality. Note that $Q(X^+)=\operatorname{better}(Q)$.

Then, whenever the decremental greedy algorithm finds a set $S$ that improves the current value of $Q(T)$, and sets $S=T$,
it must eventually reach the set $S^+$. For, until it does, the only elements that it can remove are elements with quality at most $Q(T)$,
and because this quality is less than $Q(S^+)$ the removed elements cannot belong to $S^+$.
It follows that, starting from $U$, the removal sequence must pass through the family of sets
\[ U, U^+, U^{++}, \dots, \] and that the current set $T$ must be the one of the sets in this sequence.

Then we can define $t(S)$ to be the last set in this sequence for which $S$ is a subset.
\end{proof}

\begin{theorem}
For any monotone bottleneck subset problem, the allowed removal sequences of the decremental greedy algorithm form an antimatroid.
\end{theorem}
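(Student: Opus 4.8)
The plan is to bootstrap entirely off the Lemma just established, which says that at the moment the decremental greedy algorithm picks an element to delete, its auxiliary set satisfies $T=t(S)$ for a function $t$ of the surviving set $S$ alone, monotone in the two stated senses. Granting this, I would recast the algorithm as a pure ``availability process'' on the ground set $U$: start with all of $U$ present; maintain $S$ as $U$ minus the elements appended to the output so far; at each step append to the output any surviving element $x$ that is currently \emph{available}, meaning $q(x,S)\le Q(t(S))$; stop when $S$ is empty. I would then observe that the allowed removal sequences of the decremental greedy algorithm are exactly the sequences this process can produce: when the greedy algorithm has made a given set of prior removals it is at the set $S$, its $T$ equals $t(S)$ by the Lemma (independently of the order of those removals), and its removal rule $q(x,S)\le Q(T)$ is then literally the availability condition. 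So it suffices to verify that this process meets the definition of an antimatroid quoted in \cref{sec:formal}: availability depends only on the \emph{set} of elements already appended, not their order, and an element once available stays available until it is itself appended.

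The first point is immediate, since $q(x,S)$ and (by the Lemma) $t(S)$ depend only on $S$, and $S$ is just $U$ minus the set of prior removals. For the second point I would chain two monotonicities. Suppose $x$ is available when the surviving set is $S$, and let $S'\subseteq S$ be any later surviving set still containing $x$. Monotonicity of $q$ gives $q(x,S')\le q(x,S)$, and monotonicity of $t$ (the inequality $Q(t(S))\ge Q(t(S'))$ for $S\subseteq S'$, read with the roles of $S,S'$ as here) gives $Q(t(S'))\ge Q(t(S))$; together, $q(x,S')\le q(x,S)\le Q(t(S))\le Q(t(S'))$, so $x$ is still available. Finally, to confirm the antimatroid genuinely lives on ground set $U$ (the process never stalls before exhausting $U$), I would cite the remark following the algorithm: a bottleneck element of any nonempty $S$ has $q(x,S)=Q(S)\le Q(T)=Q(t(S))$, hence is available.

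The genuinely delicate content has already been pushed into the Lemma --- namely that $T$, which superficially records a history of $Q$-improvements, is in fact a function of $S$ alone. With that in hand this theorem is essentially bookkeeping: matching the recast process against the antimatroid definition and checking that the two monotonicities align. The only thing I would take care to get right is the direction of the inequality for $t$: removing more elements can only make availability \emph{easier}, because $q(x,\cdot)$ decreases while the threshold $Q(t(\cdot))$ increases, and it is exactly this coherent direction that yields the ``stays available'' half of the antimatroid property.
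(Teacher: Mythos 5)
Your proposal is correct and takes essentially the same route as the paper: both reduce the claim to the Lemma's fact that $T=t(S)$ for a monotonic function of the surviving set $S$, and then observe that the removal condition $q(x,S)\le Q(t(S))$ is a history-independent, monotone availability predicate, which is exactly the antimatroid definition quoted in \cref{sec:formal}. You merely spell out what the paper compresses into ``a monotonic Boolean combination of the elements removed so far,'' explicitly chaining the monotonicity of $q$ with that of $t$ and noting non-stalling; the content is the same.
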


\begin{proof}
For each element $x$, $x$ is removable under the monotonic condition that $q(x,S)\le Q\bigl(t(S)\bigr)$. Because we have formulated removability for this algorithm as a monotonic Boolean combination of the elements that have been removed so far,
it follows that the removal sequences form an antimatroid.
\end{proof}

Every antimatroid can be represented in this way by a monotone bottleneck subset problem in which $q(x,S)$ is 0 if element $x$ is available to be the next element in a sequence, when the remaining elements are $S$, and 1 if it is not available. For this quality function, the allowed removal sequences of both the decremental greedy algorithm and the fixed-$\beta$ algorithm with $\beta=0$ are exactly the allowed sequences of the given antimatroid.

\section{Hardness for 3d simple polygons}
\label{sec:np-hard}

In this section, we provide a proof that It is $\mathsf{NP}$-complete to find the maxmin-angle simple polygon for points in~$\mathbb{R}^3$, for points on a unit sphere specified by rational Euler angles (\cref{thm:hard}). This input assumption allows us to place points within coordinate systems that, anywhere away from the poles of the sphere, approximate a square grid, simplifying the analysis of the numerical precision needed for the construction.
We believe that it should be possible to instead assume that the points lie on a unit sphere and are specified by rational Cartesian coordinates, but we have not worked out the details needed to use this alternative representation.

As usual for $\mathsf{NP}$-hardness proofs, we use a polynomial-time many-one reduction from a known $\mathsf{NP}$-complete problem, 3-satisfiability.  We will translate the variables and clauses of a given 3-satisfiability instance into an instance of the maxmin-angle simple polygon problem and an angle threshold $\theta$ such that the optimal angle is $\ge\theta$ for yes-instances of the 3-satisfiability problem and $<\theta$ for no-instances of the 3-satisfiability problem. More strongly, we will perform the reduction in such a way that there is a gap between the angles obtained from a satisfying assignment and the angles obtained from an invalid solution. This gap will allow us to round the points of our reduced instance to rational coordinates without affecting its correctness.

To construct our maxmin-angle simple polygon problem instance, we translate the variables and clauses of a given 3-satisfiability instance into gadgets connected by smooth curves on a unit sphere, and then sample those curves by points spaced at distance at most $\varepsilon$ (for some small-enough value of the parameter $\varepsilon$. This will ensure that, if the curves can be connected into a single non-self-intersecting curve through the gadgets), it will form a polygonal curve with all angles at least some threshold $\theta=2\pi-\Theta(\varepsilon)$. However, we will arrange our reduction so that unwanted segments between pairs of points in unrelated gadgets will be separated at a distance $\Omega(\epsilon)$ with a large enough constant that they form an angle sharper than $\theta-\Omega(\varepsilon)$ with respect to the sphere (where the $\Omega(\varepsilon)$ term is the gap between the angles of valid and invalid solutions). In this way, any polygonal curve using these unwanted segments will also have an angle sharper than $\theta$, allowing us not to worry about any unwanted non-local interactions between parts of our construction.

The overall form of the reduction is one used for instance for a different $\mathsf{NP}$-completeness proof by Lynch \cite{Lyn-SIGDA-75}. The variable gadgets form pairs of horizontal paths across a grid, and the clause gadgets form triples of vertical paths across the grid. The intended solution curve chooses one out of the two or three paths from each gadget, zigzagging horizontally in sequence through all the variable gadgets and then vertically in sequence through all the clause gadgets. See \cref{fig:redux-plan}, which draws the grid using horizontal and vertical lines in the plane. However, because we are placing points on a unit sphere, we will place this overall layout on a roughly square patch of the sphere, with these horizontal and vertical lines replaced by great-circle arcs. These arcs will not meet at exact right angles but this will not cause problems for our reduction.  The choice of which of the two paths to use for each variable gadget corresponds to the truth value assigned to that variable, and the choice of which of the three paths to use for each clause gadget corresponds to a selection of a variable whose truth assignment causes that clause gadget to become true.

We will choose the sample spacing parameter $\varepsilon$ to be sufficiently smaller than the spacing between these great circle arcs, ensuring that a polygonal curve that does not follow these arcs will have a too-sharp angle. To be more precise about what this means, suppose that we start with a 3-satisfiability instance with $n$ variables and clauses; then the overall form of the reduction from \cref{fig:redux-plan} can be embedded on a unit sphere, with arcs separated by distance $\Omega(1/n)$. Thus, we may choose $\varepsilon$ to be an appropriately small multiple of $1/n$. The layout of \cref{fig:redux-plan} involves tight curves between consecutive horizontal or vertical great-circle arcs, with radius $\Theta(1/n)$, and along these curves the points should be sampled more densely, with spacing $O(1/n^2)$, to ensure that a polygonal curve following these curves does not have too-sharp angles. Near the points of our layout where several curves meet smoothly, It may be possible for a polygonal curve with angles above the threshold $\theta$ to jump back and forth between these curves, but this is not problematic, because it still must end on a single curve through the main part of the gadget. We will also allow more dense sampling between the gadgets where two arcs cross, in order to  align the spacing of the samples within each crossing gadget.

\begin{figure}[t]
\centering\includegraphics[width=0.5\textwidth]{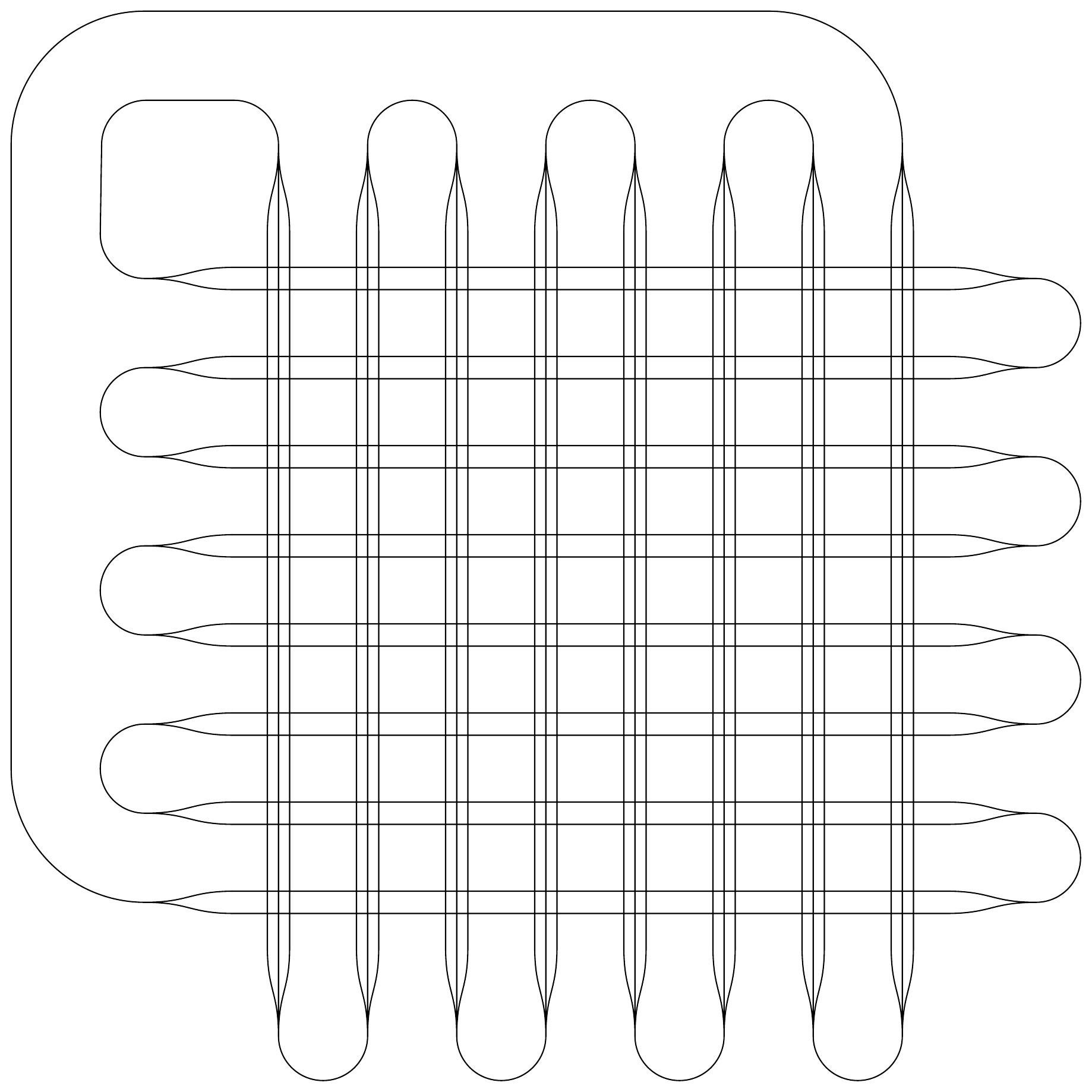}
\caption{Overall form of reduction: two horizontal paths across a grid for each variable gadget, and three vertical paths across the grid for each clause gadget, connected together to allow closed smooth curves that zigzag horizontally in sequence through each variable gadget and then zigzag vertically in sequence through each clause gadget}
\label{fig:redux-plan}
\end{figure}

It remains to describe what happens at the points where a variable gadget and a clause gadget cross each other. These are the only curve crossings in the entire construction. We desire two distinct behaviors for different types of crossing:
\begin{itemize}
\item At a crossing between the curve of a clause gadget for a clause $C$ that selects, a variable $V$ to be the variable satisfying $C$, and the curve of the variable gadget for $V$ that assigns $V$ a value incompatible with satisfying clause $C$, we cannot allow both curves to be used. We will replace this crossing with a \emph{blocking gadget} that allows one or the other curve to be used in a high-angle polygonal curve, but not both.
\item At all other crossings, it must be possible to use either curve without interference from the other. We will replace this crossing with a \emph{crossing gadget} that allows either or both curves to be used.
\end{itemize}

\begin{figure}[t]
\centering\includegraphics[width=0.5\textwidth]{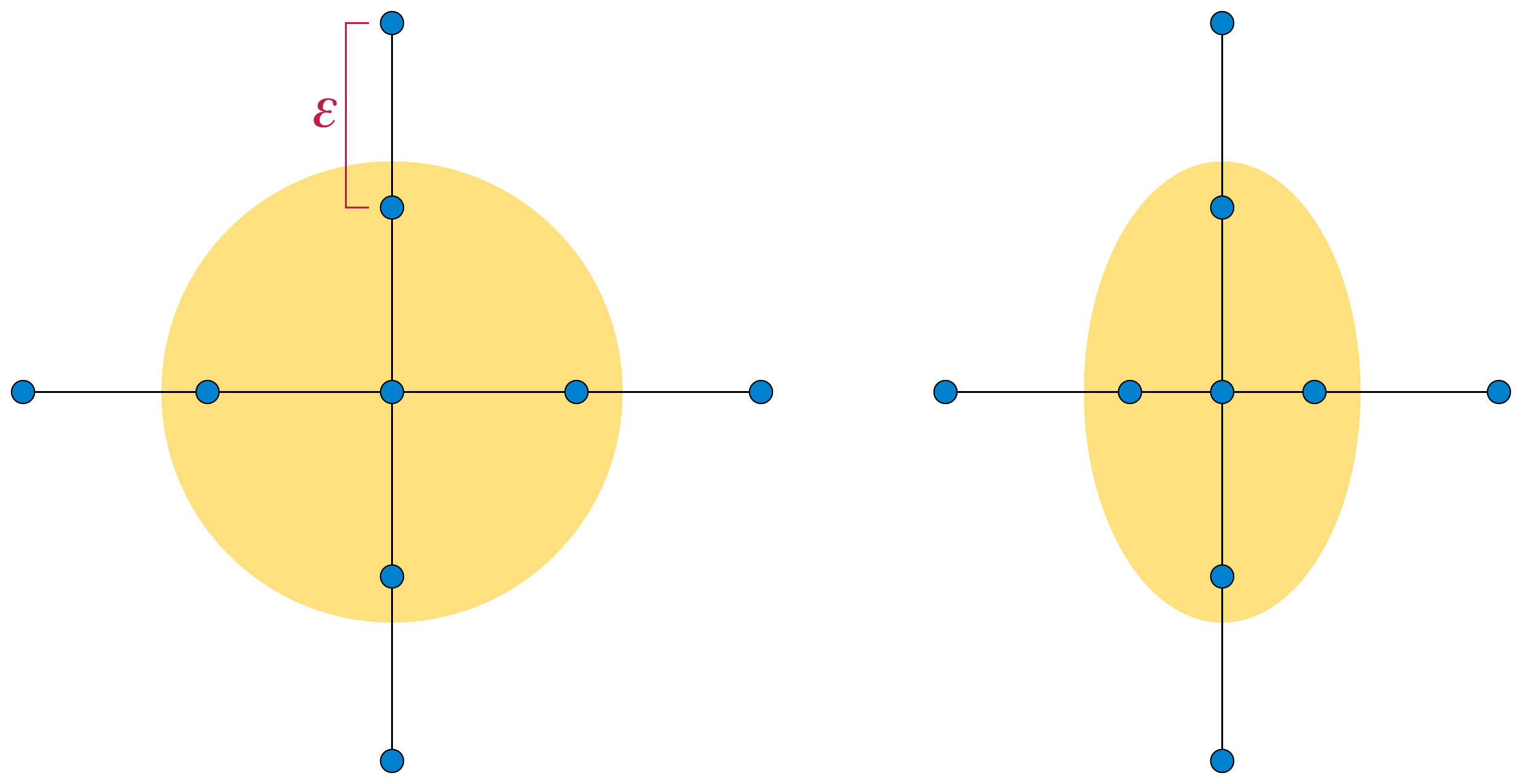}
\caption{Blocking gadget (left) and crossing gadget (right)}
\label{fig:crossings}
\end{figure}

These two gadgets are shown in \cref{fig:crossings}. They are depicted as embedded in a flat plane, but in fact will be embedded on the surface of the unit sphere, so the distances along each great-circle arc labeled as $\varepsilon$ will be as shown, but the other distances are not to scale. A polygonal curve that follows a great-circle arc through one of these crossing gadgets must use sample points along the arc spaced at distance at most $\varepsilon$ from each other; skipping any of these points would lead to an angle sharper than the threshold $\theta$. In the blocking gadget, at left, polygonal curves on either great-circle arc must use all points in the gadget, including the center point, preventing a non-self-intersecting curve from using both arcs. In the crossing gadget, at right, the horizontal curve can skip the center point, passing below the vertical path through the center point; therefore both arcs may be used by a single non-self-intersecting polygonal curve. Thus, with these gadgets placed as described above, the initial 3-satisfiability instance will have a satisfying assignment if and only if it is possible to choose which curve to follow through each variable and clause gadget in such a way that no two chosen curves pass through the same blocking gadget, producing a non-self-intersecting closed polygonal curve of minimum angle $\ge\theta$. On the other hand, if the satisfiability instance is not solvable, then any non-self-intersecting closed polygonal curve  must either follow both curves through a blocking gadget or jump from one great-circle arc to another rather than following the arcs of the layout, in either case producing an angle $<\theta$.

Recall that, in our construction, the curves connecting pairs of great-circle arcs in our construction have radius $O(1/n)$ and therefore need samples spaced at distance $O(1/n^2)$ in order to achieve angles greater than our threshold $\theta=\pi-\Theta(1/n)$.  We impose an additional requirement that the spacing of all pairs of consecutive sample points in our construction is $\Omega(1/n^2)$, preventing two sample points from being at very close distances along the curve. With this additional assumption, any sample point may be perturbed by a distance of $O(1/n^3)$ while still only changing any angle formed by three sample points by $O(1/n)$, small enough to preserve the gap between the angles in a valid solution and an invalid solution. Thus, we may construct our final instance by choosing a suitably large multiple $d$ of $n^3$, and rounding the Euler angles of each sample point to the nearest integer multiple of $1/d$. In this way, the rational numbers used to specify the Euler angles for these points need only $O(\log n)$ bits of precision each. With an overall layout consisting of $\Theta(n)$ great-circle arcs of constant length sampled by points spaced at a distance $\Theta(1/n)$ apart from each other, $\Theta(n^2)$ crossing and blocking gadgets with $O(1)$ sample points, and $\Theta(n)$ connecting curves of length $O(1/n)$ sampled by points at distance $\Theta(1/n^2)$ apart from each other, the total number of points constructed in this reduction is $\Theta(n^2)$, a polynomial.

This completes the proof. We observe that, if the crossing gadgets are arranged so that the variable-gadget curves all pass under the clause-gadget curves, the polygon corresponding to a satisfying assignment (if it exists) will be unknotted, so this also proves that finding the maxmin-angle unknotted polygon in 3d is $\mathsf{NP}$-hard.

\section{Cycles in graphs}
\label{sec:graphs}

In this section we explore bottleneck cycles in undirected graphs, directed graphs, mixed graphs, and polar graphs. In each case, we seek a cycle (of a type appropriate to the graph) whose maximum edge weight is minimized. These different types of graphs lead to problems of increasing difficulty, which can all nevertheless be solved in linear time. We will use our algorithms for cycles in directed and polar graphs as subroutines for some of our geometric problems.

In all of these problems we may define the quality of an edge to be its weight, if it belongs to a cycle in the remaining subgraph, or $-\infty$, if it does not. (However some of our algorithms can be interpreted as using a relaxation of this definition that sets some but not all weights of non-cycle edges to $-\infty$.) This is obviously monotone: the quality of an edge remains fixed until dropping to $-\infty$ once all cycles containing the edge have been broken. The edges of a bottleneck cycle will keep their weights until all lower-priority edges have been removed, so that the quality of the cycle as a subgraph is its bottleneck weight. Any subgraph of finite quality includes a cycle through every edge, and in particular it includes a cycle through its bottleneck edge, so although an optimal bottleneck subset may not itself be a cycle it will always include a bottleneck cycle.

\subsection{Undirected graphs}

Finding bottleneck cycles in an undirected graph is closely related to the complexity of maximum spanning trees. Any cycle must include at least one edge not in the maximum spanning tree, its bottleneck is the lightest such edge,
and each edge that is not in the maximum spanning tree is the lightest edge of at least one cycle (the cycle it induces in the tree). Thus, the bottleneck cycle is obtained from the heaviest edge not in the maximum spanning tree, and the cycle that it induces in the tree. It can be found in linear time once the maximum spanning tree is known.

However, this problem has a much simpler solution, suitable for an undergraduate exercise and similar  to a standard algorithm for the bottleneck spanning tree problem~\cite{Cam-IPL-78}, which we present briefly here for completeness. It is not a decremental greedy algorithm. Instead, find in linear time the median edge weight, and test whether there exists a cycle in the edges of its weight and higher. If so, delete the lighter edges and recurse. If not, contract the heavier edges and recurse. Once the recursion bottoms out with the bottleneck edge, find a path between its endpoints using heavier edges of the original graph to complete it into a cycle.

\subsection{Directed graphs}
\label{sec:directed}

Our algorithm for directed bottleneck cycles is based on a method of Gabow and Tarjan for finding bottleneck shortest path trees in directed graphs~\cite{GabTar-Algs-88}. This method repeatedly clusters the edge weights into blocks, rounds the weights to integer block indexes, and applies a fast integer algorithm to the rounded weights. Gabow and Tarjan applied their clustering method to a modified form of Dijkstra's algorithm for bottleneck paths in graphs with integer weights. We instead apply it to a different integer-weight algorithm, a linear-time decremental greedy algorithm for bottleneck cycles in directed graphs with integer (or presorted) weights.

\begin{lemma}
A bottleneck cycle in an edge-weighted directed graph with $m$ edges with sorted edge weights can be found in linear time by a decremental greedy algorithm.
\end{lemma}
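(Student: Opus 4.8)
The plan is to instantiate the decremental greedy framework of \cref{thm:antimatroid} with a \emph{relaxation} of the ``edge lies on a cycle'' quality, chosen so that the removals the algorithm is forced to make can be carried out by degeneracy-style peeling rather than by repeatedly recomputing strongly connected components. For an edge set $S$, let $\operatorname{core}(S)$ denote the unique maximal subgraph of $S$ in which every vertex has both in-degree and out-degree at least $1$; it is obtained by iteratively deleting vertices of in-degree or out-degree $0$, in linear time with a queue. Define $q(e,S)=-w(e)$ if $e\in\operatorname{core}(S)$ and $q(e,S)=-\infty$ otherwise. This is monotone: any subgraph of $S$ witnessing that $e$ is in $\operatorname{core}(S)$ is also a subgraph of any $S'\supseteq S$, so $\operatorname{core}(S)\subseteq\operatorname{core}(S')$. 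It is a strict relaxation of the ``on a cycle'' measure, since a bridge joining two vertex-disjoint cycles lies on no cycle but survives in the core; thus only some, not all, non-cycle edges are forced to $-\infty$, as the paragraph preceding the lemma anticipates.

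Next I would pin down the optimum. Let $\tau^*$ be the bottleneck-cycle value, the least $\tau$ such that the edges of weight at most $\tau$ contain a directed cycle. If $H\neq\emptyset$ has $\operatorname{core}(H)=H$, then walking forward along out-edges from any vertex of $H$ must revisit a vertex, exhibiting a directed cycle inside $H$ whose heaviest edge has weight at least $\tau^*$; hence $Q(H)=-\max_{e\in H}w(e)\le-\tau^*$. Conversely a bottleneck cycle is itself such an $H$, with heaviest-edge weight exactly $\tau^*$. So the optimum quality is $-\tau^*$, and by \cref{thm:antimatroid} the decremental greedy algorithm attains it for every legal removal sequence; moreover the maximal bottleneck subset $M$ satisfies $\operatorname{core}(M)=M$ and has heaviest-edge weight $\tau^*$, so following out-edges inside $M$ extracts a directed cycle of weight at most $\tau^*$ and --- being a cycle --- at least $\tau^*$, i.e., a bottleneck cycle.

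It remains to execute the greedy steps in linear total time using the presorted weights. The algorithm keeps the current edge set $S$, the in- and out-degrees of all vertices, a worklist of vertices currently of in- or out-degree $0$, and a pointer sweeping the sorted edge list from heaviest to lightest. Each round it (i)~drains the worklist, deleting each listed vertex and its incident edges and pushing any neighbor whose degree falls to $0$; every such edge has quality $-\infty$, so these deletions are always legal. Then (ii)~if $S$ is nonempty, $\operatorname{core}(S)=S$ now holds, so the heaviest surviving edge (found by advancing the pointer past already-deleted edges) is a bottleneck element of $S$; the algorithm records its weight as the running minimum candidate for $\tau^*$ (this is the $T$-update of the framework) and then deletes it, decrementing the two incident degrees and pushing an endpoint onto the worklist if it reaches degree $0$. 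When $S$ becomes empty, $\tau^*$ equals the smallest recorded candidate. Each edge is deleted exactly once and the pointer advances only forward, so the loop runs in $O(m+n)$ time; a concluding linear-time pass builds $\operatorname{core}$ of the weight-$\le\tau^*$ subgraph (nonempty, since it contains a bottleneck cycle) and reads a cycle off it as above.

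The main obstacle, and essentially the only idea, is the choice made in the first paragraph. Under the literal ``on a cycle'' quality the mandatory cleanup step would have to delete every edge joining distinct strongly connected components of the ever-shrinking graph, and no simple linear-time decremental algorithm is known for maintaining that partition; the substance of the proof is verifying that the $\operatorname{core}$ relaxation is still monotone, still has optimum $-\tau^*$, and still leaves an actual cycle behind --- the short arguments sketched above. Minor points to dispatch are tie-breaking among equal-weight edges and the degenerate input with no directed cycle (where one reports that none exists); neither affects the asymptotics.
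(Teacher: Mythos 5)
Your proof is correct and the underlying algorithm is the same as the paper's: iteratively peel degree-zero vertices with a ready list, and between peels sweep a pointer through the presorted weights to find and remove the extremal surviving edge, tracking the best bottleneck value seen. Two minor presentational differences are worth noting. First, you phrase the problem as min-max with $q(e,S)=-w(e)$, whereas the paper's proof uses quality $=$ weight and removes \emph{minimum}-weight edges, solving max-min (which is what the 3d angle application actually needs, despite the ``maximum edge weight is minimized'' wording in the appendix preamble); the two are mirror images and neither is wrong. Second, your quality function declares $q(e,S)=-\infty$ exactly when $e\notin\operatorname{core}(S)$ (the iterated peel), while the paper's parenthetical defines it as $-\infty$ for edges incident to an immediately degree-zero vertex; both are monotone relaxations of ``lies on a cycle,'' and both induce the same sequence of forced deletions. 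The correctness arguments also differ in flavor: the paper gives a concrete structural claim --- that the subgraph just before the final min-weight removal must be strongly connected, since otherwise two vertex-disjoint cycles in source/sink SCCs would force another removal --- whereas you derive that $\operatorname{core}(M)=M$ and $\max_{e\in M}w(e)=\tau^*$ directly from \cref{thm:antimatroid} and then extract the cycle by following out-edges. Your route leans harder on the framework and dispenses with the strong-connectivity observation, which is arguably cleaner; the paper's is a self-contained check that does not need to invoke the maximality of $M$. Either way the $O(m+n)$ bound and the final cycle-extraction pass agree.
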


\begin{proof}
In each iteration, remove a vertex with in-degree or out-degree zero, if one exists. If there is no such vertex then remove an edge of minimum weight, remembering its weight as the best seen so far. (This is the decremental greedy algorithm for a quality function that is either the weight of an edge or $-\infty$ for edges incident to degree-zero vertices.) The subgraph just before the last minimum-weight removal must be strongly connected, as otherwise it would have at least two strongly connected components that are sources or sinks (with no incoming or outgoing edges respectively), each such component would contain a cycle, and at least one of these two or more disjoint cycles would entail a later minimum-weight removal. Therefore, the last remembered edge is part of a cycle of heavier edges, a bottleneck cycle that can be obtained by finding a path in the remaining subgraph between its two endpoints.

The steps in which a minimum-weight edge is found can be performed in linear total time by scanning the sorted list of edges, starting the scan from the previously remembered minimum-weight edge and skipping past edges that have already been removed. The remaining steps can be performed in constant time each by maintaining a ready-list of degree-zero vertices, keeping track of the degree of each vertex after each removal, and moving a vertex into the ready-list when its degree becomes zero. Therefore, for a graph whose edge weights are presorted or are small-enough integers that they can be sorted in linear time, the total time is linear.
\end{proof}

\begin{theorem}
\label{thm:directed-cycle}
A bottleneck cycle in an edge-weighted directed graph with $m$ edges can be found in time $O(m\log^* m)$, accessing the edge weights only by comparisons.
\end{theorem}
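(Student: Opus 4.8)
The plan is to remove the dependence on presorted or integer weights from the previous lemma by applying the Gabow--Tarjan weight-clustering technique~\cite{GabTar-Algs-88}, exactly as those authors did to speed up their bottleneck shortest path algorithm from the integer-weight case to the general comparison-based case. The key observation is that the decremental greedy algorithm of the previous lemma never needs the numerical values of the edge weights: it only ever compares weights (to find a minimum) and reports one weight at the end. So the whole problem is to supply it with a linear-time-sortable proxy for the weights that preserves enough of the comparison structure to locate the bottleneck, and then to refine within the winning block.

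First I would recall the clustering step: given the $m$ edge weights, in $O(m)$ time partition the real line into $O(m/k)$ consecutive blocks each containing at most $k$ of the weights (a linear-time selection/bucketing step), for a parameter $k$ to be chosen; replace each edge weight by the integer index of its block. These integer indices lie in a range of size $O(m/k)$, so they can be sorted in $O(m)$ time, and the integer-weight lemma finds a bottleneck cycle for the rounded instance in $O(m)$ time; its bottleneck weight identifies a single block $B$ that must contain the true bottleneck weight. Second I would argue correctness of this localization: any cycle's rounded bottleneck is a lower bound (up to one block) for its true bottleneck, and the rounded optimum therefore pins the true optimum into $B$; edges with rounded weight strictly below $B$'s index can be contracted away (they are guaranteed lighter, hence usable to complete a cycle), edges strictly above can be kept unconditionally, and we recurse on the at-most-$k$ edges whose weights fall in $B$, together with the contracted/kept structure, to find the true bottleneck edge and then a completing path. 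Third, I would set up the recursion on the number of ``active'' edges, which drops from $m$ to $k$ in one $O(m)$-time round; choosing $k = m/\log m$ (or iterating the halving of $\log$) gives the standard geometric-with-$\log^*$ analysis, yielding total time $O(m \log^* m)$ with only comparisons used on the weights. Throughout, vertices with in- or out-degree zero are pruned just as in the lemma, so the ``strongly connected at the moment of the last minimum removal'' argument still certifies that the remembered edge lies on a cycle of heavier (active or kept) edges.

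The main obstacle I expect is the bookkeeping that makes the recursion genuinely linear per level and genuinely a bottleneck-preserving reduction: one must show that contracting the provably-light edges and keeping the provably-heavy edges does not create or destroy the relevant cycles (a contracted light edge can always be expanded back into a path, and a kept heavy edge never becomes the bottleneck), that the recursive subinstance really has only $O(k)$ edges plus $O(1)$ auxiliary structure per block so that the sizes telescope, and that the final ``complete the bottleneck edge into a cycle'' path-search is done once, in the original graph, in $O(m)$ time rather than at every recursive level. These are the same technical points Gabow and Tarjan handle for shortest-path trees; adapting them to cycles is routine but is where the real work lies. Everything else --- the integer-weight base case, the block construction, the $\log^*$ arithmetic --- is immediate from the previous lemma and the cited clustering method.
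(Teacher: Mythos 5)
Your high-level plan is the same as the paper's: apply the Gabow--Tarjan weight-clustering idea to bootstrap the linear-time integer-weight lemma to a comparison-based $O(m\log^* m)$ bound. But the mechanism you describe is genuinely different from the paper's, and the specific form you give has gaps.

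The paper never shrinks the graph. It performs $\log^* m$ full runs of the linear-time decremental greedy algorithm, always on all $m$ edges, each time with small-integer proxy weights obtained by refining a block of candidates. It tracks a parameter sequence $\alpha_1=1$, $\alpha_{i+1}=2^{\alpha_i}$, maintains a block $B_i$ of at most $m/\alpha_i$ edges that may contain the bottleneck, and in round $i$ uses recursive median-finding to split $B_i$ into sub-blocks of size $m/\alpha_{i+1}$; this costs $O\bigl((m/\alpha_i)\log\alpha_{i+1}\bigr)=O(m)$ because $\log\alpha_{i+1}=\alpha_i$. Then it numbers every one of the $m$ edges by its sub-block index and reruns the integer-weight lemma to learn which sub-block holds the bottleneck. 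After $\log^* m$ rounds $\alpha_i\ge m$ and the block is a single edge. Note that the $\log^* m$ factor appears precisely because the graph does not shrink: each round is $\Theta(m)$ work.

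Your proposal instead tries to contract or discard edges outside the identified block $B$ and recurse on a smaller instance. Two problems. First, the sign is reversed: in the maxmin convention of this paper (remove the lightest edge; the bottleneck is the minimum weight on the optimal cycle), edges lighter than $B$ are forbidden and must be deleted, while edges heavier than $B$ are the freely usable ones; you describe contracting the light edges as ``usable to complete a cycle,'' which is the minmax convention. Second, and more seriously, the contraction step is not routine for directed cycles. The natural object to contract is the strongly connected components of the heavy subgraph, but the condensation DAG can still carry $\Theta(m)$ heavy inter-component edges, and a cycle using $B$-edges may need to route through many of them, so the ``recursive subinstance really has only $O(k)$ edges'' claim does not follow. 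Relatedly, your time analysis doesn't actually produce $\log^* m$: if each round reduced the instance from $m$ to $k=m/\log m$ in $O(m)$ time, the work would telescope to $O(m)$, not $O(m\log^* m)$. The paper sidesteps all of this by re-running on the whole graph with refined proxy weights rather than reducing the graph itself.
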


\begin{proof}
We perform a sequence of runs of the linear-time decremental greedy algorithm, in each run narrowing the range of edge weights known to contain the bottleneck value. We control the number of edges remaining in this bottleneck weight range by a sequence of parameters $\alpha_i$, with $\alpha_1=1$ and $\alpha_{i+1}=2^{\alpha_i}$. In the $i$th run, we will already have found a block $B_i$ of at most $m/\alpha_i$ edges that might contain the bottleneck value, consecutive in the sorted order of edge weights, with the edges outside this block having weights that are known to be above or below the bottleneck value. As a base case, this is trivially true for $\alpha_1$, as the single block contains all edges. We use recursive median-finding to refine block $B_i$ into a sorted sequence of sub-blocks of at most $m/\alpha_{i+1}$ edges each. Because of how $\alpha_{i+1}$ was chosen, the recursive median-finding step takes total time $O\bigl((m/\alpha_i)\log\alpha_{i+1}\bigr)=O(m)$. We number the sub-blocks in sorted order, and number the edges by the sub-block they belong to, with edges above or below the current block assigned equal numbers above or below the sub-block numbers, respectively. Then we sort the edges by these small integers and apply the linear-time decremental greedy algorithm for sorted sequences, obtaining as a result the identity of the sub-block containing the bottleneck value.

After $\log^*m$ iterations of this $O(m)$-time refinement process, we will have $\alpha_i\ge m$, at which point we have identified a single edge known to be the bottleneck value. At this point we can delete all lower-weight edges and find a cycle in the remaining graph.
\end{proof}

\subsection{Mixed graphs}

A \emph{mixed graph} is allowed to contain both directed and undirected edges. A cycle, in such a graph, can use the undirected edges in either direction, but all of its directed edges must be consistently oriented. A \emph{walk} in a mixed graph allows repeated edges and vertices, with the same restriction on consistent orientation of directed edges, but allowing a single undirected edge to be traversed inconsistently, in both directions. We define a \emph{u-turn} in a walk to be a pair of steps that use the same undirected edge consecutively in opposite directions (these have also been called \emph{backtracks}). We say that the \emph{u-turn} is at a vertex $v$ if $v$ is the endpoint of the undirected edge that occurs between its two consecutive uses. Replacing every undirected edge of a mixed graph by two directed edges preserves all walks (by a one-to-one correspondence), but we will use a modified form of this replacement that prevents u-turns. We define walks as having an orientation, even when they use only undirected edges, so that for instance every undirected cycle of the undirected subgraph of a mixed graph corresponds to two walks.

\begin{lemma}
\label{lem:mixed-cycle}
A mixed graph has a cycle if and only if it has a closed u-turn-free walk.
\end{lemma}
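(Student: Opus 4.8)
The plan is to prove both implications through a shortcut argument that extracts a simple cycle from a closed u-turn-free walk. The forward direction is immediate: viewing a cycle $C$ as a closed walk, it traverses each of its edges exactly once, so no undirected edge is used by two consecutive steps and in particular $C$ contains no u-turn; moreover $C$ orients its directed edges consistently by the definition of a cycle, so it is a legitimate closed walk. (Degenerate cycles such as a single self-loop, or a digon formed by two antiparallel directed edges or two parallel undirected edges, are closed u-turn-free walks for the same reason.)

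For the converse I would take a closed u-turn-free walk $W = v_0, e_1, v_1, \dots, e_k, v_k$ with $v_k = v_0$ having the fewest steps, and show that $W$ is a cycle. If it is not, some vertex other than the coincidence $v_0 = v_k$ repeats, so there is a least index $j$ for which $v_j$ equals an earlier vertex $v_i$ with $i < j$; here $j < k$, since otherwise $v_0, \dots, v_{k-1}$ would be distinct. Consider the \emph{first-return loop} $C = v_i, e_{i+1}, v_{i+1}, \dots, e_j, v_j$; since $v_i = v_j$ it is a closed walk, it has $j - i < k$ steps, its vertices $v_i, \dots, v_{j-1}$ are pairwise distinct by minimality of $j$, and it inherits a consistent orientation of its directed edges from $W$.

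The crux is to verify that $C$ is u-turn-free. Every consecutive step-pair of $C$ not involving the wrap-around at $v_i = v_j$ is also a consecutive step-pair of $W$, hence not a u-turn, so it remains only to rule out a u-turn in the pair $(e_j, e_{i+1})$. I would argue more strongly that the edges $e_{i+1}, \dots, e_j$ are all distinct: if instead $e_p = e_q$ for $i < p < q \le j$ with $e_p$ an undirected edge, then $\{v_{p-1}, v_p\} = \{v_{q-1}, v_q\}$ as vertex pairs, and since $v_i, \dots, v_{j-1}$ are distinct while $v_j = v_i$, a short case analysis forces $p = i+1$, $q = j$, and $j = i+2$, so that $W$ uses the undirected edge $e_{i+1} = e_{i+2}$ in opposite directions at $v_{i+1}$ --- a u-turn in $W$, a contradiction. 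Hence $e_j \neq e_{i+1}$, the pair $(e_j, e_{i+1})$ is not a u-turn, and $C$ is a closed u-turn-free walk with fewer steps than $W$, contradicting minimality. Therefore $W$ repeats no vertex, and the same edge-distinctness argument applied to $W$ itself shows $W$ uses no edge twice, so $W$ is a simple cycle with consistently oriented directed edges.

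I expect the main obstacle to be precisely this u-turn check for $C$: one cannot shortcut at an arbitrary repeated vertex, because either of the two sub-loops obtained by splitting there might close up with a u-turn. Choosing the \emph{first} return is what forces the sub-loop to have distinct vertices, and distinct vertices together with the u-turn-free hypothesis on $W$ is exactly what rules out the wrap-around u-turn. It would also be worth checking the degenerate length-$1$ and length-$2$ cases (self-loops and digons) separately and confirming that the intended reading of ``has a closed u-turn-free walk'' excludes the empty walk of length $0$.
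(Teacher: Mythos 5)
Your proof is correct, but it takes a genuinely different route from the paper's. The paper first reduces to the case where the undirected part of $W$ is acyclic (otherwise it already has a cycle), notes that $W$ must then use at least one directed edge, deletes every edge not on $W$, and contracts each tree of undirected $W$-edges to a super-vertex. The u-turn-free, closed nature of $W$ forces each super-vertex of the resulting directed multigraph to be both entered and exited by directed edges, so that multigraph is not a DAG and contains a directed cycle; replacing each super-vertex on that cycle by the unique tree path between its entry and exit point recovers a mixed cycle of $G$. Your argument is instead a minimal-counterexample shortcut: take a shortest closed u-turn-free walk, split off the first-return sub-loop at the earliest repeated vertex, and check that the only possibly problematic consecutive pair --- the wrap-around pair $(e_j, e_{i+1})$ --- cannot be a u-turn because the first-return property would collapse the sub-loop to two steps, exposing a u-turn already in $W$. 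Your route is more elementary and self-contained, needing only the first-return trick and a small case analysis; the paper's route is conceptually cleaner in that it piggybacks on the ``non-DAG implies directed cycle'' dichotomy and makes transparent why u-turn-freeness matters (it is exactly what guarantees in- and out-degree at the contracted super-vertices). Both are valid; your closing remark that the first-return choice (rather than an arbitrary repeated vertex) is what makes the wrap-around check go through identifies precisely the point where a naive shortcut argument would fail.
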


\begin{proof}
If it has a cycle then that cycle is a walk, so need only show that a mixed graph $G$ that has a u-turn-free walk $W$ has a cycle. If the undirected subgraph of $G$ has a cycle, the result is immediate, so we may assume without loss of generality that this subgraph is acyclic, and that $W$ contains at least one directed edge. Let $G'$ be the directed multigraph obtained from $G$ by deleting all edges that do not participate in $W$ and by contracting all undirected edges in $W$. Then every vertex of $G'$ has both incoming and outgoing edges, so $G'$ cannot be a DAG and it contains at least one cycle $C'$. From $C'$, obtain a cycle $C$ in $G$ by uncontracting the undirected components of $G$ and replacing each vertex in $C'$ that comes from a contracted undirected component by a path through that component.
\end{proof}

\begin{figure*}[t]
\centering\includegraphics[scale=0.25]{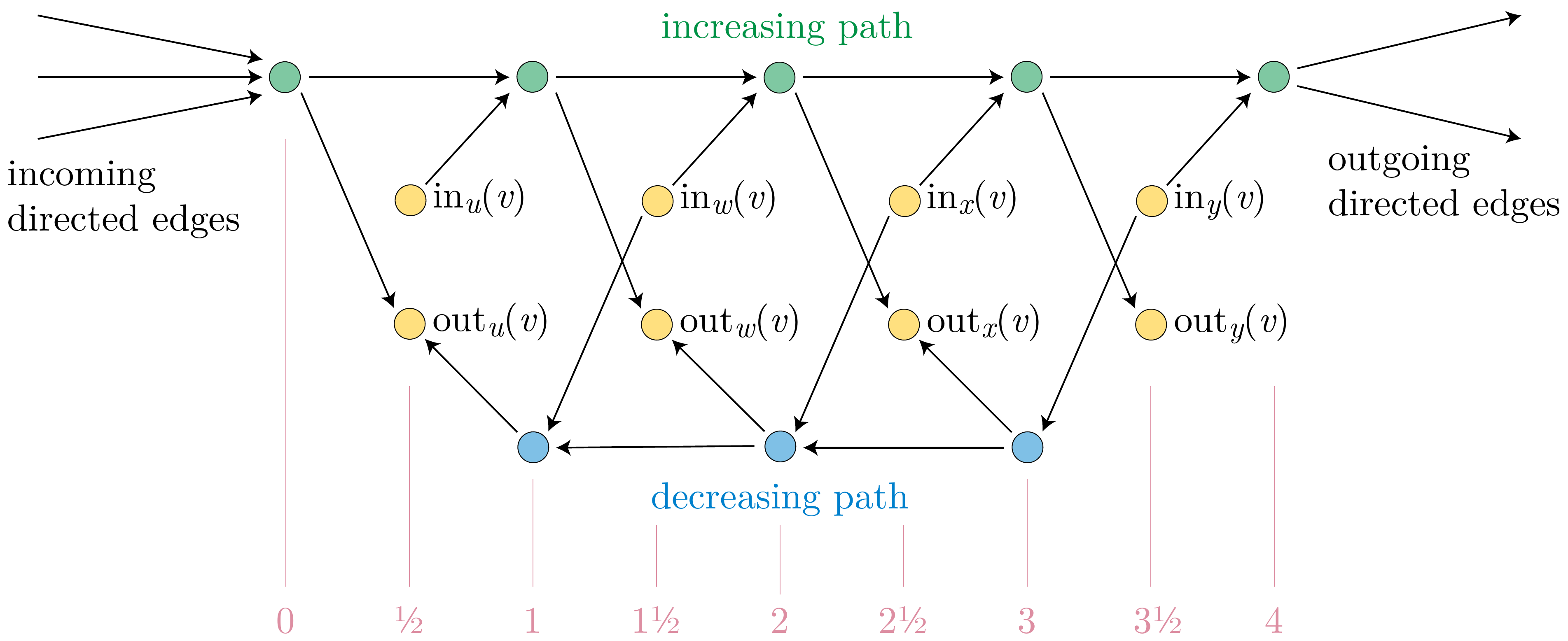}
\caption{Expansion of a mixed graph vertex $v$, with incident undirected edges $uv$, $wv$, $xv$, and $yv$, into a subgraph of a directed graph (\cref{lem:no-u-turn})}
\label{fig:mixed-expansion}
\end{figure*}

\begin{lemma}
\label{lem:no-u-turn}
It is possible to transform a mixed graph $G$ with $m$ edges into a directed graph with $O(m)$ edges, in such a way that the closed walks of the resulting directed graph correspond one-for-one with the closed walks of the mixed graph that do not have u-turns.
\end{lemma}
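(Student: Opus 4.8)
The plan is to build the directed graph by a local vertex-expansion (gadget) construction, following the picture in \cref{fig:mixed-expansion}. The only place a u-turn can occur is at a vertex $v$ where an undirected edge is traversed into $v$ and then immediately back out along the same edge, so the construction only needs to modify the neighborhood of each vertex and can leave directed edges essentially untouched. First I would fix a vertex $v$ and let $e_1,\dots,e_k$ be the undirected edges incident to $v$ (with $uv, wv, xv, yv$ the example in the figure). For each incident undirected edge $e_i = v v_i$ I would create, at the $v$ side, a pair of \emph{port vertices} $v_i^{\mathrm{in}}$ and $v_i^{\mathrm{out}}$: the directed edge $v_i^{\mathrm{out}} \to (\text{corresponding port on the } v_i \text{ side})$ and $(\text{port on the } v_i\text{ side}) \to v_i^{\mathrm{in}}$ realize the two directions of traversing $e_i$. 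Directed edges $ab$ originally incident to $v$ are re-attached to a single ``core'' copy of $v$: incoming directed edges point into the core and outgoing directed edges leave the core.

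Next I would wire the ports through the core so that every legal transit of $v$ is permitted \emph{except} the immediate reversal on one undirected edge. Concretely, from each $v_i^{\mathrm{in}}$ I would add a directed edge to the core (or directly to every $v_j^{\mathrm{out}}$ with $j \neq i$, and to every original outgoing directed edge), and from the core (resp.\ from each incoming directed edge) I would add a directed edge to every $v_j^{\mathrm{out}}$; the one forbidden transition is $v_i^{\mathrm{in}} \to v_i^{\mathrm{out}}$, which is exactly a u-turn on $e_i$ at $v$. Using a single core vertex keeps the out-degree of the expansion $O(\deg v)$ rather than $O(\deg^2 v)$, so the total edge count is $\sum_v O(\deg v) = O(m)$; I would double-check that suppressing $v_i^{\mathrm{in}}\to v_i^{\mathrm{out}}$ is still compatible with routing through one core vertex, which is the one spot where a naive construction would need the edge and a small additional trick (e.g.\ two core vertices, ``arrivals'' and ``departures'', with all arrivals $\to$ all departures, and each $v_i^{\mathrm{in}}$ going to departures while each $v_i^{\mathrm{out}}$ comes from arrivals, plus a direct $v_i^{\mathrm{in}}\to v_j^{\mathrm{out}}$ for $j\ne i$) is needed. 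Either way the blow-up stays linear.

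I would then establish the claimed one-for-one correspondence between closed walks. In one direction, a closed walk in $G$ with no u-turn is translated step by step: a directed-edge step maps to the same directed edge (through the core), an undirected-edge step $v_i \to v$ followed by $v \to v_j$ maps to the port path $\cdots \to v_i^{\mathrm{in}} \to (\text{core}) \to v_j^{\mathrm{out}} \to \cdots$, and the absence of u-turns guarantees $j \neq i$ whenever both are undirected, so the needed edge exists in the gadget. In the other direction, any closed directed walk in the constructed graph decomposes uniquely into gadget-transits, each transit reading off one vertex-passage of a walk in $G$; since the forbidden edge $v_i^{\mathrm{in}} \to v_i^{\mathrm{out}}$ is absent, no transit can encode a u-turn. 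Checking that these two maps are mutually inverse is routine bookkeeping once the gadget is pinned down.

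The main obstacle is purely the gadget design: getting the interior wiring of the expansion so that (i) it forbids precisely the immediate-reversal transitions and nothing else, (ii) it still admits a transit from every non-reversing incoming direction to every outgoing direction (including the interaction between undirected ports and ordinary directed edges at $v$), and (iii) the number of new edges is $O(\deg v)$ rather than quadratic. Once the figure's gadget is formalized correctly, the correspondence argument is a short induction on walk length and the edge-count bound is an immediate degree sum.
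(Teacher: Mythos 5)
Your overall plan matches the paper's: replace each vertex $v$ by a local gadget with an ``in'' port and an ``out'' port for each incident undirected edge, route the original directed edges through the gadget, and set up the internal wiring so that every transit of $v$ is realizable by a unique path except the immediate reversal $v_i^{\mathrm{in}}\to v_i^{\mathrm{out}}$. You also correctly single out the crux, which is to do this with only $O(\deg v)$ internal edges.

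The gap is that neither of your candidate wirings actually meets all three requirements, and your closing claim that ``either way the blow-up stays linear'' is false. A single hub cannot forbid $v_i^{\mathrm{in}}\to v_i^{\mathrm{out}}$ at all, because the hub has no memory of which in-port it was entered from. Your fallback---an arrivals/departures pair ``plus a direct $v_i^{\mathrm{in}}\to v_j^{\mathrm{out}}$ for $j\neq i$''---does forbid exactly the u-turn, but those direct edges number $d(d-1)$ at a vertex of undirected degree $d$, so the expansion has size $\Theta\bigl(\sum_v \deg^2 v\bigr)$, which can be $\Theta(m^2)$ (a star). The idea you are missing is the paper's two-path gadget: assign the undirected ports at $v$ distinct half-integer labels $\tfrac12,\dots,d-\tfrac12$ and build two internal directed paths of $O(d)$ vertices each, one with labels increasing and one with labels decreasing. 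Hook $\mathrm{in}_u(v)$ into both paths so that from it you can reach every larger label along the increasing path and every smaller label along the decreasing path, and hook $\mathrm{out}_w(v)$ into both paths so that it is reachable from every smaller label on the increasing path and every larger label on the decreasing path; the directed side of $v$ attaches at the two ends of the increasing path. Then $\mathrm{in}_u(v)$ reaches $\mathrm{out}_w(v)$ (by a unique internal path) exactly when $u\neq w$, reaches and is reached by the directed side, and the gadget has only $O(d)$ edges, giving $O(m)$ in total by the degree-sum formula. Once that gadget is in hand, the walk correspondence you sketch and the edge count are indeed routine, as you say.
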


\begin{proof}
We show how to expand each vertex $v$ into a gadget that prevents $u$-turns at $v$. Let the undirected degree of $v$ be $d_u(v)$. Then we replace $v$ by the following vertices:
\begin{itemize}
\item Two vertices for each undirected edge $uv$, an \emph{incoming terminal} $\mathrm{in}_u(v)$ (to which the incoming directed edge corresponding to $uv$ attaches) and an \emph{outgoing terminal} $\mathrm{out}_u(v)$ (to which the outgoing directed edge attaches).
\item A directed path of $d+1$ vertices, the \emph{increasing path}. We will attach all incoming directed edges of $v$ in $G$ to the starting vertex of the increasing path, and all outgoing directed edges of $v$ in $G$ from the ending vertex of the increasing path.
\item Another directed path of $d-1$ vertices, the \emph{decreasing path}.
\end{itemize}
In order to preserve closed walks, we need to connect these vertices by additional directed edges in such a way that there are no closed walks within $v$, and so that every incoming edge to $v$ has a unique path to every outgoing edge from $v$, except that there should be no path from $\mathrm{in}_u(v)$ to $\mathrm{out}_u(v)$. To do so, we number the vertices of the increasing path from $0$ to $d$, we number the vertices of the decreasing path in descending order from $d-1$ to~$1$, and we assign distinct half-integers to the undirected edges $uv$, arbitrarily, from $\tfrac12$ to $d-\tfrac12$. We connect each vertex $\mathrm{in}_u(v)$, with number $x$, by edges to vertex $x+\tfrac12$ of the increasing path and to vertex $x-\tfrac12$ of the decreasing path, so that it can reach everything with a larger number on the increasing path and everything with a smaller number on the decreasing path. Symmetrically, we connect each vertex $\mathrm{out}_u(v)$, with number $x$, by edges from vertex $x-\tfrac12$ of the increasing path and from vertex $x-\tfrac12$ of the decreasing path, so that it can be reached by everything with a smaller number on the increasing path and by everything with a larger number on the decreasing path. In this way, connections in~$G$ through~$v$ that use a directed edge (either into or out of $v$) can be made through the increasing path. Connections through~$v$ that come into $v$ on one undirected edge $uv$ and then leave through another undirected edge $vw$ can be made on the increasing path, if $uv$ is numbered less than $vw$, and on the decreasing path, otherwise. The construction is illustrated in \cref{fig:mixed-expansion}.
\end{proof}

\begin{theorem}
A bottleneck cycle in an edge-weighted mixed graph with $m$ edges can be found in time $O(m\log^* m)$, accessing the edge weights only by comparisons.
\end{theorem}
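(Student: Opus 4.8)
The plan is to reduce the mixed-graph bottleneck cycle problem to the directed case already solved in \cref{thm:directed-cycle}, using the u-turn-preventing vertex expansion of \cref{lem:no-u-turn} to build the directed graph and \cref{lem:mixed-cycle} to translate the solution back. First I would apply the construction of \cref{lem:no-u-turn} to the given mixed graph $G$, producing in $O(m)$ time a directed graph $G'$ with $O(m)$ edges, and then weight the edges of $G'$: an edge of $G'$ that is a directed copy of a directed edge $e$ of $G$, or one of the two oppositely directed copies of an undirected edge $e$ of $G$, inherits the weight of $e$, while each edge internal to a vertex gadget is given a sentinel weight larger than every real weight. In the max-min convention used throughout this section this sentinel is harmless: by \cref{lem:no-u-turn} no closed walk of $G'$ is contained in a single gadget, so every cycle of $G'$ uses at least one finite-weight edge and its bottleneck (minimum-weight) edge is one of these, hence a gadget-internal edge is never a bottleneck. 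This is precisely why \cref{lem:no-u-turn} rather than the naive replacement of each undirected edge by two opposite directed edges is needed: the naive replacement would create a directed $2$-cycle for every undirected edge, even though a lone undirected edge is not a cycle of $G$.

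Next I would run the algorithm of \cref{thm:directed-cycle} on the weighted $G'$, obtaining in $O(m\log^* m)$ time a bottleneck directed cycle of $G'$, of bottleneck weight $\beta$. The heart of the argument is that $\beta$ equals the bottleneck value of $G$. In one direction, a bottleneck cycle of $G$ is a simple mixed cycle, hence a u-turn-free closed walk, so by \cref{lem:no-u-turn} it corresponds to a closed walk of $G'$ whose non-internal edges copy cycle edges and whose internal edges carry the sentinel weight; thus every edge of this walk has weight at least the bottleneck value of $G$, and any closed walk of $G'$ contains a directed cycle with the same lower bound, so $\beta$ is at least the bottleneck value of $G$. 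In the other direction, the bottleneck directed cycle of $G'$ is in particular a closed walk of $G'$, so by \cref{lem:no-u-turn} it corresponds to a closed u-turn-free walk $W$ of $G$ whose edges are exactly the $G$-edges copied by its non-internal edges, all of weight at least $\beta$. Applying \cref{lem:mixed-cycle} to the subgraph $H$ of $G$ consisting of all edges of weight at least $\beta$, which contains $W$, shows that $H$ contains a mixed cycle, so the bottleneck value of $G$ is at least $\beta$; hence the two values coincide.

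To report an explicit bottleneck cycle rather than only its value, I would invoke the constructive content of the proof of \cref{lem:mixed-cycle} on the subgraph $H$: delete the edges of $H$ not used by $W$, contract the undirected edges used by $W$, find a directed cycle in the resulting directed multigraph, and uncontract it into a mixed cycle of $H$, all in linear time. Together the steps cost $O(m) + O(m\log^* m) = O(m\log^* m)$, and edge weights are consulted only inside the comparison-based subroutine of \cref{thm:directed-cycle} and in comparisons against the sentinel, so the whole algorithm is comparison-based. I expect the main obstacle to be phrasing the correspondence of \cref{lem:no-u-turn} carefully enough that the bottleneck value is visibly preserved in both directions — in particular accounting for the fact that \cref{thm:directed-cycle} returns a subgraph that merely contains a bottleneck cycle, and that the walk $W$ obtained by translating that cycle back to $G$ need not be simple, so the extra application of \cref{lem:mixed-cycle} is genuinely required to produce an honest mixed cycle.
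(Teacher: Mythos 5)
Your approach is the same as the paper's, which simply applies \cref{lem:no-u-turn} and then \cref{thm:directed-cycle} in a two-sentence proof. The difference is that the paper leaves implicit two details you spell out: that the gadget-internal edges must be given a sentinel weight larger than all real weights (so they are never the bottleneck of a cycle of~$G'$), and that \cref{lem:no-u-turn} only gives a correspondence between \emph{closed walks}, not simple cycles, so a simple directed cycle of $G'$ may come back as a non-simple u-turn-free closed walk of $G$, and one more application of \cref{lem:mixed-cycle} (or its constructive content) is needed to extract an honest mixed cycle from that walk. Both of these are genuine gaps in the paper's terse statement of the proof, and your proposal correctly and carefully fills them.
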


\begin{proof}
Transform the graph into a directed graph, preserving $u$-turn-free walks but eliminating u-turns, by \cref{lem:no-u-turn}. Then apply \cref{thm:directed-cycle} to the resulting directed graph.
\end{proof}

\subsection{Polar graphs}
\label{sec:polar}

We use the concept of a \emph{polar graph} of Zelinka~\cite{Zel-AM-74}, later called a switch graph~\cite{Coo-NCiCA-03}. This is a graph in which each vertex has two \emph{poles}, with each incident edge assigned to one of the two poles. A \emph{regular path} or \emph{regular cycle} in such a graph must avoid repeated vertices, and at each intermediate vertex of the path or cycle it must enter and exit through opposite poles. Polar graphs can be expanded into \emph{skew-symmetric graphs}, directed graphs with two copies of each polar graph vertex, one for each direction of travel through it, in which regular paths can be found in linear time using a generalization of alternating path search~\cite{GolKar-Comb-96}. A regular cycle can then be found, if one exists, by using this path-finding algorithm to try to extend each edge into a cycle. Alternatively, in a polar graph of maximum degree three, a greedy algorithm that repeatedly removes bridge edges and vertices with edges incident to only one pole (neither of which can participate in a regular cycle) reduces the input to a subgraph in which each nontrivial component contains a regular cycle~\cite{Coo-NCiCA-03}. We will use both of these ideas as part of our decremental greedy algorithm for bottleneck regular cycles.

\begin{figure}[t]
\centering\includegraphics[width=\columnwidth]{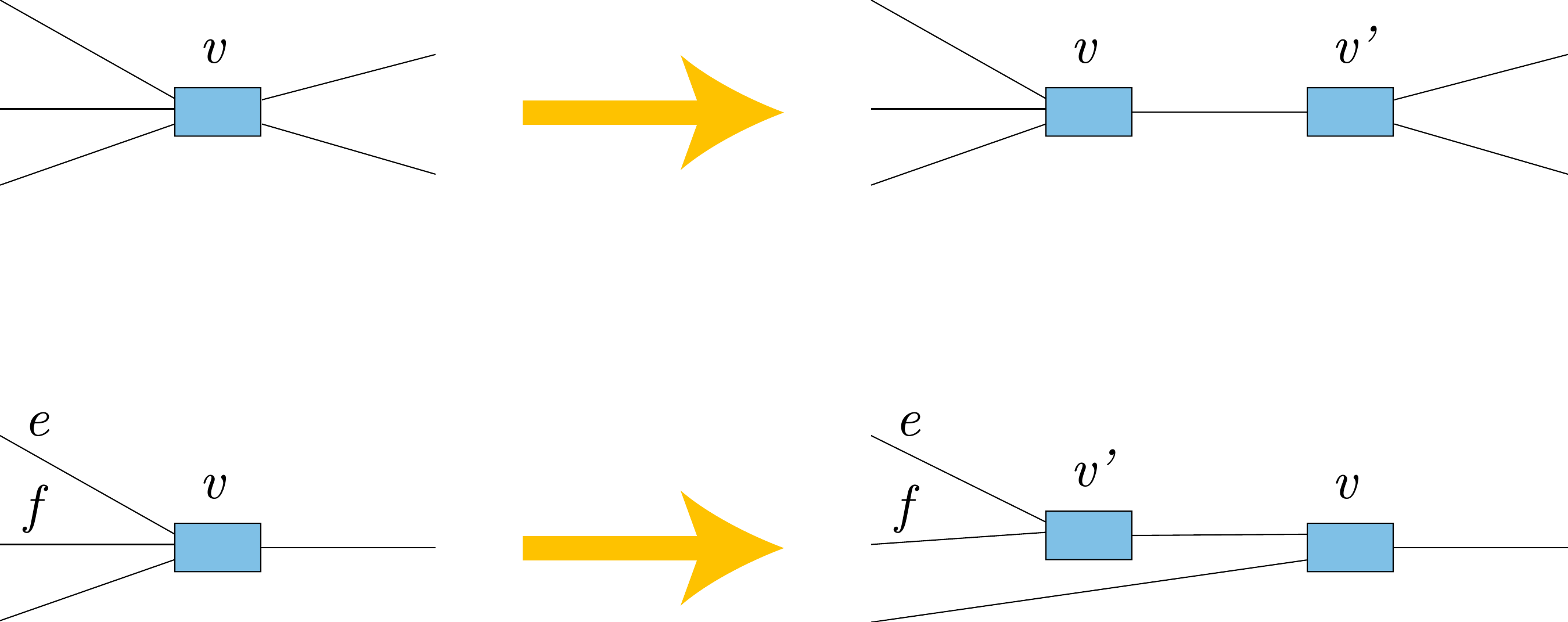}
\caption{Expanding a polar graph into a degree-three polar graph (\cref{lem:polar-tree}). The two poles of each vertex are depicted as the opposite sides of a rectangle.}
\label{fig:polar-tree}
\end{figure}

\begin{lemma}
\label{lem:polar-tree}
In a polar graph $G$, it is possible to replace any vertex $v$ of total degree $d>3$ by a tree of $d-2$ vertices and $d-1$ additional edges, in such a way that a set of edges forms a regular cycle in $G$ if and only of the same set, together with a unique path in the added edges, forms a regular cycle in the expanded graph.
\end{lemma}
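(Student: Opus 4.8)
The plan is to expand the high-degree vertex $v$ into a binary-tree-like structure of degree-3 vertices, much as one expands a high-degree vertex in a flow network, but keeping track of the pole structure. The idea is that a regular cycle through $v$ uses exactly two of the $d$ edges incident to $v$, one at each pole, so in the expanded graph we want exactly those two terminal leaves to be connectable by a path through the added edges, and no ``short-circuit'' within one pole to be possible.

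First I would split the $d$ incident edges of $v$ according to which pole they attach to, giving $d_1$ edges at one pole and $d_2=d-d_1$ at the other. I build a tree whose leaves are the $d$ attachment points and whose internal nodes are new degree-3 polar-graph vertices; the two ``sides'' of the tree (the subtree of pole-1 leaves and the subtree of pole-2 leaves) are joined at a root edge. Each internal tree vertex gets its two poles assigned so that (a) on the pole-1 side, the pole facing toward the root is one pole and the two poles facing the children are the other, and symmetrically on the pole-2 side, and (b) the root edge connects the root-facing poles of the two top vertices. The key invariant to maintain is: a regular path can traverse this gadget between two leaves if and only if the two leaves lie in different sides of the tree (i.e., were attached to different poles of $v$), and in that case the path is unique. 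I would verify this by a short induction on the tree: entering a leaf on one side, a regular path is forced up toward the root (it cannot turn back down into the same side because that would require entering and leaving an internal vertex through the same pole), crosses the root edge exactly once, and is then forced down to a uniquely determined leaf on the other side. Two leaves on the same side are never connected, because any path between them would have to enter some internal vertex twice or violate the opposite-pole condition there. A vertex count check gives $d-2$ internal vertices and $d-1$ added edges (a tree on $d$ leaves with internal nodes of degree 3).

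With this gadget established, the proof of the lemma is essentially bookkeeping: replace $v$ by the gadget, reattaching each edge of $G$ formerly incident to a pole of $v$ to the corresponding leaf. A regular cycle $C$ in $G$ that avoids $v$ is unchanged. A regular cycle $C$ through $v$ uses one edge $e_1$ at pole 1 and one edge $e_2$ at pole 2; these now attach to leaves on opposite sides, so by the invariant there is a unique path $\pi$ through the added edges joining them, and $C' = (C \setminus \{v\}) \cup \pi$ is a regular cycle in the expanded graph; conversely any regular cycle in the expanded graph that meets the gadget enters and leaves it through exactly one leaf on each side (it cannot use two leaves on the same side, nor can it enter, leave, and re-enter the gadget, since internal vertices have degree 3 and the side-subtrees are trees), hence restricts to a regular cycle of $G$ through $v$ whose two incident edges are at opposite poles. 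Edges internal to the gadget carry no weight constraint issues here since the lemma is only about which edge sets form regular cycles. Iterating the construction over all vertices of degree $>3$ reduces any polar graph to a polar graph of maximum degree 3 with only a linear blowup in size, which is the form needed for the greedy bridge-removal algorithm of Cook cited above.

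The main obstacle I anticipate is getting the pole assignments on the internal tree vertices exactly right so that the ``same side $\Rightarrow$ no path'' direction truly holds — it is tempting to build a naive binary tree in which a path could sneak from one pole-1 leaf up and back down to another pole-1 leaf. The fix is to make the two child-edges of each internal vertex attach to the \emph{same} pole (the one opposite the parent edge): then any traversal that comes up from a child and tries to go back down to a sibling would have to enter and exit that internal vertex through the same pole, which a regular path forbids. Once that design choice is pinned down, the induction and the one-for-one correspondence of regular cycles follow routinely. A secondary, purely cosmetic point is handling the parity of the tree (when $d_1$ or $d_2$ is $1$ or $2$) so that every internal vertex has degree exactly $3$ and the leaf/internal counts come out to $d$ and $d-2$; this is handled by letting a side with a single leaf contribute no internal vertex and attaching that leaf directly to the root edge.
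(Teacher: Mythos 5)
Your proposal is correct, but it takes a genuinely different route from the paper. The paper proves the lemma by a short induction: it shows how to replace $v$ by \emph{two} vertices joined by one new edge (either splitting $v$ at its poles when both poles have at least two incident edges, or splitting off a degree-three vertex $v'$ carrying two edges $e,f$ from the heavier pole), verifies the one-for-one correspondence of regular cycles \emph{locally} for that single split, and then recurses on the vertices that still have degree greater than three. You instead build the entire degree-three tree gadget in one shot and argue \emph{globally} about which tree paths respect the pole constraint. The two constructions produce essentially the same gadget, and your key design choice — both child edges at the same pole, the parent edge at the opposite pole, with a root edge joining the two sides — is exactly what makes the global argument go through, and is implicitly what the paper's split of case~1 produces. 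The paper's route sidesteps the pole-assignment delicacy you flag, since each split is so small that the pole assignment is forced and the verification is a two-line check; your route makes the final structure more explicit but requires the more careful global path analysis you sketch. Two minor points: the statement that the path, after crossing the root edge, is ``forced down to a uniquely determined leaf'' is imprecise — the descent leaf is determined by the second edge the cycle uses at $v$, not by the ascent alone; uniqueness of the path comes from uniqueness of tree paths between a given pair of leaves. And both your count and the lemma statement say $d-1$ added edges, whereas a tree on $d-2$ new vertices has $d-3$ internal edges; you inherited this discrepancy from the lemma statement and it does not affect correctness.
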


\begin{proof}
We show that we can replace $v$ by two vertices connected by a single edge. The full result follows by an induction on $d$ that performs this vertex split operation and then applies the induction hypothesis to both split vertices.

To split $v$ into two vertices connected by an edge, If $v$ has at least two edges at both of its poles, split its poles into two separate vertices,  and add an edge between the opposite poles of these two new vertices (\cref{fig:polar-tree}, top). Every regular cycle through $v$ in $G$ corresponds uniquely to a cycle through both of these two vertices and the added edge. Other regular cycles are unchanged.

In the remaining case, $v$ has only one edge attached to one pole, and three or more edges attached to the other pole.
Let $e$ and $f$ be two edges attached to the same pole of $v$. Add a new vertex $v'$, connected to~$v$ at the same pole as $e$ and $f$, and then reattach $e$ and~$f$ from $v$ to the opposite pole of $v'$ (\cref{fig:polar-tree}, bottom). This reduces the degree of $v$ by one and adds one new degree-three vertex. The regular cycles through $e$ or $f$ in $G$ correspond uniquely to regular cycles through $v'$ in the expanded graph, and other regular cycles are unchanged.
\end{proof}

Call the replacement of \cref{lem:polar-tree} a \emph{polar tree}. If we assign high weight to all edges in a polar tree, it will not change the bottleneck edge or quality of any regular cycle.

\begin{theorem}
In a polar graph with $m$ edges, a bottleneck regular cycle can be found in time $O\bigl(m(\log m\log\log m)^2\bigr)$.
\end{theorem}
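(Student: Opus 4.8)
The plan is to reduce to bounded degree, run the decremental greedy algorithm of \cref{thm:antimatroid} on a suitably relaxed quality function, maintain that quality function decrementally with a dynamic bridge-finding structure plus cheap bookkeeping, and then recover an actual cycle with one alternating-path search.

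\emph{Bounded degree.} First I would apply \cref{lem:polar-tree} to every vertex of degree $d>3$, giving each of the newly introduced polar-tree edges a weight larger than any original weight. Since $\sum_v(d(v)-1)=O(m)$, the resulting degree-three polar graph $G'$ still has $O(m)$ edges, and by \cref{lem:polar-tree} the regular cycles of $G'$ correspond to regular cycles of $G$ with the added edges contributing only heavy, hence non-bottleneck, weights; so a bottleneck regular cycle of $G'$ yields one of $G$ with the same bottleneck value, and it suffices to work in $G'$.

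\emph{The bottleneck subset problem.} I would use the relaxed quality function alluded to at the opening of \cref{sec:graphs}: for a current subgraph $S$ and an edge $e\in S$, let $q(e,S)$ be $\operatorname{weight}(e)$ if $e$ still belongs to a nontrivial component of the subgraph obtained from $S$ by repeatedly deleting bridges and vertices all of whose incident edges lie at a single pole, and $-\infty$ otherwise. This is monotone, since those deletion rules are themselves monotone under edge removal and weights are fixed, so \cref{thm:antimatroid} applies: the decremental greedy algorithm returns the maximal bottleneck subset $M$, with $Q(M)=\beta$ equal to the optimal value (the matching lower bound for the true optimum comes from feeding the optimal cycle's own edge set to the framework). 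Cooper's characterization~\cite{Coo-NCiCA-03} guarantees that the component $K$ of $M$ containing $M$'s minimum-weight edge $e^\ast$, of weight $\beta$, itself contains a regular cycle, and since every edge of $K$ has weight at least $\beta$, that cycle is a bottleneck regular cycle. I would recover it with one alternating-path search in the skew-symmetric expansion of $K$~\cite{GolKar-Comb-96}, in linear time.

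\emph{Efficiency, and the main obstacle.} The cost is dominated by maintaining $q(\cdot,S)$ across the $O(m)$ edge deletions. The one-pole deletion rule is cheap: keep a per-vertex pair of pole-degree counters and a ready list, so each cascading vertex deletion costs $O(1)$ amortized, $O(m)$ total; pre-sorting the weights once lets the ``$Q(S)>Q(T)$'' comparisons and the choice of a minimum-weight edge to delete when no $-\infty$ edge remains be handled by scanning, exactly as in \cref{thm:directed-cycle}. The expensive ingredient is the bridge rule: after a deletion we must discover which edges have newly become bridges, i.e.\ we need decremental $2$-edge-connectivity on an $O(m)$-edge graph, and plugging in a dynamic data structure with $O\bigl((\log m\log\log m)^2\bigr)$ amortized time per operation yields the claimed $O\bigl(m(\log m\log\log m)^2\bigr)$ total. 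I expect the delicate points to be: (a) arranging the degree-three construction, or its skew-symmetric expansion, so that ``lying on a regular cycle'' is governed by ordinary bridges of a fixed auxiliary graph, with the pole constraint realized purely as bookkeeping layered on top of the dynamic structure; and (b) charging the interleaved one-pole and bridge reduction cascades so that the amortized accounting over the whole run really does stay within the stated bound.
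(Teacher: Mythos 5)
Your proof is correct and takes essentially the same approach as the paper's: reduce to degree three via \cref{lem:polar-tree} with heavy dummy weights, decrementally remove single-pole vertices, bridges, and lightest edges, maintain bridges with the word-RAM dynamic $2$-edge-connectivity structure at $O\bigl((\log m\log\log m)^2\bigr)$ per operation, and recover the cycle with a linear-time regular-path search between the bottleneck edge's endpoints in the state just before it was removed. You are slightly more explicit than the paper in casting this as a monotone bottleneck subset problem with a fully specified relaxed quality function and invoking \cref{thm:antimatroid} directly (including verifying monotonicity and the matching lower bound via the optimal cycle's edge set), whereas the paper describes the removal priorities procedurally and relies on the Cooper characterization without spelling out the quality function; this is a useful clarification but not a different argument.
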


\begin{proof}
Expand the graph into a polar graph of degree three by \cref{lem:polar-tree}. Then, use a greedy algorithm that repeatedly removes vertices and edges, with the following priority:
\begin{itemize}
\item If there is a vertex with edges incident only to a single pole, remove it and all its incident edges.
\item If there is a bridge edge (an edge whose removal would disconnect the underlying undirected graph of the current polar graph), remove it.
\item If neither of the two previous cases applies, remove the lightest remaining edge.
\end{itemize}

The bridges can be found by applying applying dynamic graph algorithms for bridge-finding in the word RAM model~\cite{HolRotTho-SODA-18}, in time $O\bigl(m(\log m\log\log m)^2\bigr)$ per edge removal. When the lightest remaining edge is needed, it can be found by using a priority queue of remaining edges, in time $O(\log m)$ per edge removal.
The bottleneck edge is the the last one that edge was removed as a lightest remaining edge.
Once it is found, a regular cycle using it can be found by returning the graph to the state just before its removal, and then applying the linear-time algorithm for finding a regular path between the endpoints of the bottleneck edge.
\end{proof}

\putbib
\end{bibunit}

\fi

\end{document}